\documentclass{sig-alternate-2013}
\usepackage{amssymb}
\usepackage{amsmath}
\usepackage[dvipsnames]{xcolor}
\usepackage{kantlipsum}
\usepackage{framed}
\usepackage{url}
\usepackage{epsfig}
\usepackage{graphicx}
\usepackage
[pdfstartview=FitH,colorlinks,linkcolor=red,citecolor=blue]{hyperref} 
\usepackage{breakurl}
\usepackage{flushend}

\usepackage{amsthm}

% For bar chart
\usepackage{pgfplots}
\usetikzlibrary{patterns}

\newcommand{\ignore}[1]{}

% Required by ACM template
\clubpenalty=10000      % penalty for creating a club line at end of line.
\widowpenalty=10000     % penalty for creating a widow line at top of page.
%\pretolerance=9000

% Chose abbreviated or long-version alternatives in paper
%\long\def\abbr#1#2{#1}      % abbreviated version
\long\def\abbr#1#2{#2}     % long version

\setlength{\textfloatsep}{12pt plus 10pt minus 5pt}
% Squeeze more text around figures

\setcounter{totalnumber}{50}
\setcounter{topnumber}{50}
\setcounter{bottomnumber}{50}

\DeclareRobustCommand*{\reflemmanumth}{\ref*{lemma:numth}}

\newfont{\mycrnotice}{ptmr8t at 7pt}
\newfont{\myconfname}{ptmri8t at 7pt}

\permission{Permission to make digital or hard copies of all or part of this
work for personal or classroom use is granted without fee provided that
copies are not made or distributed for profit or commercial advantage and
that copies bear this notice and the full citation on the first page.
Copyrights for components of this work owned by others than the author(s)
must be honored. Abstracting with credit is permitted. To copy otherwise,
or republish, to post on servers or to redistribute to lists, requires
prior specific permission and/or a fee. Request permissions from
permissions@acm.org.} 
\conferenceinfo{CCS'13,}{November 4--8, 2013, Berlin, Germany. \\
  {\mycrnotice{Copyright is held by the owner/author(s). Publication rights
    licensed to ACM.}
  }
} 
\copyrightetc{ACM \the\acmcopyr}
\crdata{978-1-4503-2477-9/13/11\ ...\$15.00.\\
  http://dx.doi.org/10.1145/2508859.2516680}

% Remove copyright box in extended version
\makeatletter
   \abbr{}{\def\@copyrightspace{\relax}}
\makeatother

% Conflict with ACM style class
%\usepackage{amsthm}
%\theoremstyle{plain}
\newtheorem{theorem}{Theorem}

\newtheorem{lemma}[theorem]{Lemma}
\newtheorem*{lemma*}{Lemma}

\newtheorem{fact}[theorem]{Fact}

\renewcommand{\paragraph}[1]{\vspace{0.8\baselineskip plus 0.8\baselineskip minus 0.5\baselineskip}%
  \noindent\textbf{#1.} }

%\long\def\xxx#1{{\bf XXX: }{\small [#1]}}
\long\def\xxx#1{}

\def\polylog{{\mathop{\mathrm{polylog }}\nolimits}}

\makeatletter

%%%%%%%%%%%%%%%%%%%%%%%%%%%%%%%%%%%%%%%%%%%%%%%%%%%%%%%%%%%%%%%%%%%%%%
%%%%%%%%%%%%%%%%%%%%%%%%%%%%%%%%%%%%%%%%%%%%%%%%%%%%%%%%%%%%%%%%%%%%%%
%%%%%%%%%%%%%%%%%%%%%%%%%%%%%%%%%%%%%%%%%%%%%%%%%%%%%%%%%%%%%%%%%%%%%%
\begin{document}

\title{Ensuring High-Quality Randomness\\
  in Cryptographic Key Generation}
\abbr{}{
\subtitle{(Extended version)\titlenote{%
This is an extended and corrected version of a paper which appeared in the
proceedings of the 2013 ACM Conference on Computer and Communications
Security (CCS). 
The proceedings version contained an error in the DSA protocol and
accompanying security proof.
This version corrects that error and contains evaluation results
for the revised protocol.\\
\-\ \ To ensure that the our RSA and DSA evaluations reflect similar network
conditions, we reran the network experiments whose results we summarize in
Table~\ref{tab:timing} and Figure~\ref{fig:keysize}.
The cross-country round-trip network latency has increased to $100$~ms
from $80$~ms since we ran the experiments for the proceedings version 
of the paper.\\
\-\ \ This version also contains the full proof of security for the
RSA protocol.}}
}

% CCS reviewing is double-blind, so we should
% comment out the author names
\numberofauthors{4}
\author{
\alignauthor Henry Corrigan-Gibbs\abbr{\titlenote{
Work conducted while author was a staff member
at Yale University.}}{}\\
  \affaddr{Stanford University}\\
  \email{henrycg@stanford.edu}
\alignauthor Wendy Mu\\
  \affaddr{Stanford University}\\
  \email{wmu@cs.stanford.edu}
\and
\alignauthor Dan Boneh\\
  \affaddr{Stanford University}\\
  \email{dabo@cs.stanford.edu}
\alignauthor Bryan Ford\\
  \affaddr{Yale University}\\
  \email{bryan.ford@yale.edu}
}

\maketitle

\begin{abstract}
The security of any cryptosystem
relies on the secrecy of the system's secret keys.
Yet, recent experimental work demonstrates
that tens of thousands of devices on the Internet
use RSA and DSA secrets drawn from a 
small pool of candidate values.
As a result, an adversary can derive the 
device's secret keys without breaking 
the underlying cryptosystem.
We introduce a new threat model, 
under which there is a {\em systemic} solution 
to such randomness flaws.
In our model, when a device generates a cryptographic key, 
it incorporates some random values from an {\em entropy authority} into
its cryptographic secrets and then {\em proves} to the authority,
using zero-knowledge-proof techniques, that it
performed this operation correctly.
By presenting an entropy-authority-signed 
public-key certificate to a third party 
(like a certificate authority or SSH client),
the device can demonstrate that its 
public key incorporates randomness from the authority
and is therefore drawn from a large pool of candidate values.
Where possible, our protocol protects against eavesdroppers,
entropy authority misbehavior, and devices attempting
to discredit the entropy authority.
To demonstrate the practicality of our protocol,
we have implemented and evaluated its performance on a 
commodity wireless home router.
When running on a home router, our protocol 
incurs a $2.1\times$ slowdown over
conventional RSA key generation and it 
incurs a $4.4\times$ slowdown over conventional
EC-DSA key generation.
\end{abstract}

\abbr{ % Only put this ACM stuff in the camera-ready version
\category{C.2.0}{Computer-Communication Networks}{General}[Security and protection]
\category{C.2.2}{Computer-Communication Networks}{Network Protocols}[Applications]
\category{E.3}{Data Encryption}{Public key cryptosystems}

\keywords{
entropy authority;
cryptography;
key generation;
RSA;
DSA;
entropy;
randomness
}
}{}

\section{Introduction}
A good source of randomness is crucial for 
a number of cryptographic operations.
Public-key encryption schemes use randomness to
achieve chosen-plaintext security,
key-exchange algorithms use randomness to 
establish secret session keys, and
commitment schemes use randomness to 
hide the committed value.
The security of these schemes relies
on the unpredictability of the random input values, 
so when the ``random'' inputs are not really 
random, dire security failures 
result~\cite{%
%cve-2000-0357,
cve-2001-0950,
cve-2001-1141,
cve-2001-1467,
cve-2003-1376,
cve-2005-3087,
cve-2006-1378,
cve-2006-1833,
cve-2007-2453,
cve-2008-0141,
cve-2008-0166,
cve-2008-2108,
cve-2008-5162,
%cve-2009-0255,
cve-2009-3238,
cve-2009-3278,
cve-2011-3599,
goldberg96randomness,heninger12mining,lenstra12ron,yilek09private}.

Although the dangers of weak randomness have been part of the
computer security folklore for years~\cite{goldberg96randomness},
entropy failures are still commonplace. 
In 2008, a single mistaken patch caused the OpenSSL distribution
in all Debian-based operating systems 
to use only the process ID (plus a few other easy-to-guess values) 
as the seed for its pseudo-random number generator.
This bug caused affected machines to select
a 1024-bit RSA modulus from a pool of fewer than one million
values, rather than the 
near-$2^{1000}$ possible values~\cite{yilek09private}.
By replaying the key generation process using each of the
one million possible PRNG seeds, an adversary could
recover the secret key corresponding to one of these 
weak public keys in a matter or hours or days.

Recent surveys~\cite{heninger12mining,lenstra12ron}
of SSH and TLS public keys on the Internet demonstrate
that hardware devices with poorly seeded random number generators
have led to a proliferation of weak cryptographic keys.
During the drafting of this paper, NetBSD maintainers discovered
a bug caused by a ``misplaced parenthesis'' that could have caused
NetBSD machines to generate 
cryptographic keys incorporating 
as few as 32 or 64 bits of entropy,
instead of the 100+ expected bits~\cite{netbsd13security}.
Even more recently, a PRNG initialization bug in 
the Android operating system could have caused
applications using the system to generate weak
cryptographic keys~\cite{klyubin13some}.

Randomness failures continue to haunt
cryptographic software for a number of reasons:
the randomness ``stack'' in a modern operating system~\cite{ristenpart10good}
is large and complex,
there is no simple way to test whether a
random number generator is really generating random
numbers, and (at least in the context of cryptographic keys)
there has never been
a {\em systemic solution} to the randomness problem.
The response to entropy failures
has traditionally been {\em ad hoc}: each device manufacturer or
software vendor patches RNG-related bugs 
in its own implementation (once discovered),
without deploying techniques 
to prevent similar failures in the future.
The quantity and severity of randomness failures
suggests that this ``fix the implementation'' 
approach is grossly insufficient. 

We offer the first systemic solution to the entropy problem
in cryptographic key generation for public-key cryptosystems.
In our protocol, a device generating a cryptographic
keypair fetches random values from an {\em entropy authority}
and incorporates these values into its cryptographic secrets.
The device can later {\em prove} to third parties 
(e.g., a certificate authority or an SSH client) 
that the device's secrets incorporate the authority's 
random values, thus guaranteeing
that the device's cryptographic keys are selected
from a large enough pool of candidate values.
Unlike certificate authorities in today's Internet,
our entropy authorities are {\em not} trusted third 
parties: if the device has a strong entropy source,
a malicious entropy authority learns no useful
information about the device's secret key.
We present versions of our protocol for 
RSA and DSA key generation and we 
offer proofs of security for each.

A subtlety of our solution is the threat model: 
under a traditional ``global passive adversary''
model, the adversary can completely simulate
the view of a device that has a very weak entropy source.
Thus, under the global passive adversary model,
a device with a weak entropy source has no hope of
generating strong keys.
We propose an alternate threat model, 
in which the adversary can observe all communication
{\em except for} one initial communication session between
the device and the entropy authority.
Under this more limited adversary model, which is
realistic in many deployment scenarios, we
can take advantage of an entropy authority to 
ensure the randomness of cryptographic keys.

The key generation protocols we present
are useful both for devices with strong and weak entropy sources.
In particular, if the device has a strong entropy source
(the device can repeatedly sample from the uniform distribution over a large set of values), 
running the protocol {\em never weakens} the device's cryptographic keys.
In contrast, if the device has a weak or biased entropy source, running
the protocol can {\em dramatically strengthen} 
the device's keys by ensuring that its keys 
incorporate sufficient randomness.   The device need not know 
whether it has a strong or weak entropy source:
the same protocol is used in both cases. 

A recent survey of public keys~\cite{heninger12mining} 
suggests that embedded devices are responsible for generating 
the majority of weak cryptographic keys on the Internet.
To demonstrate that our protocols are practical even 
on this type of computationally limited network device, we have 
evaluated the protocols on a \$70 Linksys home router
running the dd-wrt~\cite{dd-wrt} operating system. 
Our RSA key generation protocol incurs a 
$2.1\times$ slowdown
on the Linksys router when generating a 2048-bit key, 
and our RSA and DSA protocols incur no more than 
$2$ seconds of slowdown on a laptop and a workstation.
The DSA version of our protocol 
is compatible with both the elliptic-curve 
and finite-fields DSA variants.
Our protocols generate standard RSA and DSA keys
which are, for a given bit-length, as secure as
their conventionally generated counterparts.

In prior work, Juels and Guajardo~\cite{juels02verifiable} 
present a protocol in which a possibly malicious
device generates an RSA key in cooperation with a certificate authority.
Their protocol prevents a device
from generating an {\em ill-formed keypair}
(e.g., an RSA modulus that is the product of more than two primes).
We consider a different threat model.
We ensure that a device samples its keys 
from a distribution with high min-entropy, but we do not prevent
the device from generating malformed keys.
Under this new threat model, 
we achieve roughly a $25\times$ performance improvement
over the protocol of Juels and Guajardo 
(as measured by the number of 
modular exponentiations that the device must compute).
Section~\ref{sec:rel} compares the two protocols and
discusses other related work.

After introducing our threat model in Section~\ref{sec:model},
we describe our key generation protocols in Section~\ref{sec:proto}
and present security proofs in Section~\ref{sec:sec}.
Section~\ref{sec:eval} summarizes our evaluation results
and Section~\ref{sec:impl} discusses issues related to 
integrating our protocols with existing systems.
 
\subsection{Why Other Solutions Are Insufficient}

Before describing our protocol in detail, we discuss
a few other possible, but unsatisfactory, ways 
to prevent networked devices 
from using weak cryptographic keys.\footnote{%
By {\em weak keys} we mean keys sampled from
a distribution with much less min-entropy than
the user expects. 
For example, a 224-bit EC-DSA key sampled
from a distribution with only 20 bits of min-entropy
is weak.
}

\paragraph{Possible Solution \#1: Fix the implementation}
One possible solution to the weak key problem is
to simply make sure that cryptography libraries properly 
incorporate random values into the cryptographic 
secrets that they produce.
Unfortunately, bugs and bad implementations are a fact of
life in the world of software, and 
the subtleties of random number generation 
make randomness bugs particularly common. 
Implementations that seed their random number generators
with public or guessable values 
(e.g., time, process ID, or MAC 
 address)~\cite{cve-2001-1141,
cve-2008-0141,
cve-2008-0166,
cve-2008-5162,
cve-2008-5162%
%cve-2009-0255
},
implementations that use weak random number 
generators~\cite{cve-2003-1376,cve-2005-3087,cve-2006-1378,
  cve-2009-3278,
  cve-2011-3599},
and implementations without a good source
of environmental entropy~\cite{heninger12mining}
are all vulnerable.

The complexity of generating cryptographically strong random numbers,
the overwhelming number of randomness failures in 
deployed software, and the difficulty of detecting
these failures during testing all indicate
that ``fix the implementation'' is an insufficient
solution to the weak key problem.
Given that some implementations will be buggy, there should
be a way to {\em assure} clients that their TLS and SSH servers are using
strong keys, even if the client suspects that the
servers do not have access to a good source of random values.

\paragraph{Possible Solution \#2: Simple entropy server}
A second possible solution would be to have devices fetch
some random values from an ``entropy sever'' and incorporate
these values (along with some random values that the device picks)
into the device's cryptographic secrets.
As long as the adversary cannot observe the device's communication
with the server, the server would provide an 
effective source of environmental entropy.

One problem with this approach comes in attributing
blame for failures.
If a device using an entropy server produces
weak keys, the device might {\em blame the entropy server} 
for providing it with weak random values.
In turn, the entropy server could claim that it provided the device
with strong random values but that the device failed to 
incorporate them into the device's cryptographic secrets.
Without some additional protocol, a third party will not be
able to definitively attribute the randomness failure to either
the device or the entropy server.

\paragraph{Possible Solution \#3: Key database}
A third possible technique to 
prevent devices from using weak keys would be to deploy
a ``key database'' that contains a copy of every public
key on the Internet.
A non-profit organization could run this database, much
as the Electronic Frontier Foundation maintains the
SSL Observatory~\cite{eff-cert},
a static database of public keys on the Internet.

Whenever a device with a 
potentially weak entropy source generates a 
new keypair, the device would send its new public 
key to the key database.
If the database already contains that key (or if the
database contains an RSA modulus that shares a factor
with the new key), the device would generate a fresh key
and submit it to the database.
The device would continue this generate-and-submit 
process until finding a unique key.
At the end of the process, the device would be guaranteed
to have a key that is unique, at least amongst
the set of keys in the database. 

Unfortunately, this proposed solution would {\em obscure} 
the entropy problem without fixing it.
An attacker could replay the entire key generation process
using the known initial state of a device with 
a weak entropy source to learn
the secret keys of that device.
By creating a centralized database of (possibly weak) keys, 
such a solution would make it easier for
attackers to find and compromise weak keys.

\section{System Overview}
\label{sec:model}
Our proposed solution to the weak key 
problem, 
pictorially represented in Figure~\ref{fig:proto},
takes place between 
a {\em device}, an {\em entropy authority}, a
{\em certificate authority}\footnote{%
IETF documents~\cite{rfc2527} use the term
{\em certification authority} but we will follow
common usage and use {\em certificate authority}.}
(optionally),
and a {\em client}.
We describe the roles of each of these participants
before outlining our threat model and
the security properties of the scheme.

\subsection{Participants}

\begin{figure}[t]
\centering
\includegraphics[trim=0.06in 0 0 0,clip,width=0.47\textwidth]{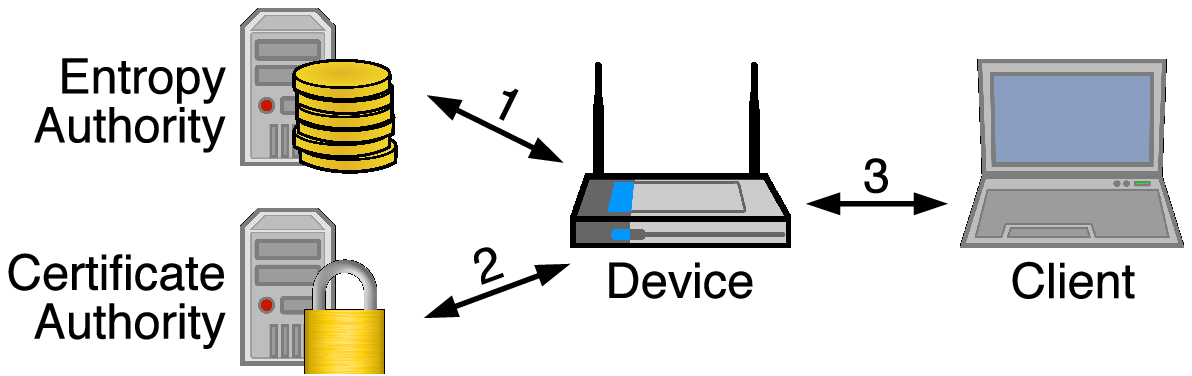} 
\caption{Overview of the protocol participants.
(1) The device fetches random values from the entropy
authority, proves to the authority that its key 
incorporates these values, and obtains a signature
on the key from the EA.
(2) The device submits its EA-signed public key to 
the certificate authority for signing.
(3) The device presents an EA-signed key to
a connecting client to prove that 
its keypair incorporates entropy from the authority.}
\label{fig:proto}
\end{figure}

\paragraph{Device}
The {\em device} is the entity generating
the RSA or DSA keypair that we want to 
ensure is sufficiently random, even if the
device does not have access to a strong
internal entropy source.
The device might be an embedded device
(e.g., a commodity wireless home router), or
it might be a full-fledged server.
The device could use the keypair it generates to
secure HTTPS sessions and to
authenticate itself in SSH sessions. 

\paragraph{Entropy authority (EA)}
The {\em entropy authority} is the participant
responsible for ensuring that a device's keypair
is selected with enough randomness 
(is sampled independently
from a distribution with high enough min-entropy).
Just as a certificate authority verifies the
identifying information (name, address, etc.) 
on a user's public key, the entropy authority
verifies the {\em randomness} of a user's public key.

As the device generates its cryptographic
keypair, it fetches some random values from the
entropy authority and then {\em proves} to
the entropy authority that it has
incorporated these values into its keypair.
The entropy verifies this proof and then 
signs the device's public key.
In practice, an entropy authority is just
a public Web service with which the device interacts 
when the device first generates its keypair.
We assume that the entropy authority has
a strong entropy source, but that the entropy 
authority might be malicious.

We imagine a future in which there are a large number
of public entropy authorities on the Internet, run by 
corporate IT departments, certificate authorities, universities,
and other large organizations.
A device would select its entropy authority much 
as users select certificate authorities today: based on 
reputation and reliability.
To defend against the failure (or maliciousness)
of a single entropy authority, the device
could interact with a number of entropy 
authorities to generate a single key, 
as we describe in Section~\ref{sec:impl}.

\paragraph{Certificate authority (CA)}
The {\em certificate authority} plays the role of
a conventional CA: the certificate authority confirms
that the real-world identity of the device matches
the identity listed in device's certificate, after which
the CA signs the device's certificate.
In our model, CAs will only sign certificates that 
have been signed first by an entropy authority. 
In this way, CAs are guaranteed to sign only public keys
that are drawn from a distribution with high min-entropy.
Since many certificates 
(particularly in embedded devices)
are self-signed, the CA is an optional entity in our protocol.

\paragraph{Client}
The {\em client} is anyone who connects to the device.
In our model, the client can ensure that the
device has a sufficiently random public key by verifying
the entropy authority's signature on the key.
Every client keeps a signature verification
key for each entropy authority it trusts, 
just as today's Web browsers maintain a list of 
public keys for trusted root CAs.

\subsection{Threat Model}
\label{sec:model:threat}

Throughout the paper, 
we say that a participant is
{\em honest} if it performs the protocol
correctly and is {\em dishonest} otherwise.
A device has a {\em strong entropy source} if it can
repeatedly sample from the uniform distribution over
some set (e.g., $\{0, 1\}$).
We say that the device has a {\em weak
entropy source} otherwise.
A {\em strong key}, for our purposes, is
a key independently sampled from a distribution over the
set of possible keys that has
at least $k - \polylog(k)$ bits of 
min-entropy, where $k$ is the security parameter.
In other words, a device generates strong keys if the
probability that the device will generate a particular
public key $\textsf{pk}$ is at most $2^{- (k - \polylog(k))}$ for
all public keys $\textsf{pk}$.
A {\em weak key} is any key that is not strong.

We use min-entropy as our metric for randomness 
because min-entropy gives an upper bound on the 
adversary's ability to guess samples from the distribution.
If a distribution has $k$ bits of min-entropy, then the
adversary will be able to correctly guess the value of 
a sample from the distribution with probability at most $2^{-k}$.
Additionally, the min-entropy of the distribution gives an
upper bound on the {\em collision probability} of the
distribution---the
probability that two independent 
samples from the distribution are equal.
Therefore, if a key generation protocol outputs public keys sampled from a
distribution with min-entropy $k$, two devices generating keys using the
protocol will have a negligible chance (in $k$) of generating the same public
key or of generating RSA keys which share a prime factor.

The goal of our protocol is to have the 
device interact with the entropy authority in 
such a way that, after the interaction, the
device holds a strong cryptographic key.
This overall goal must be tempered a few realities.
In particular, if a device has a no
entropy source (or a very weak entropy source), 
then a {\em global} eavesdropper 
can always learn the device's secret key.

To see why this is so, consider that 
a device with no entropy source is just 
a deterministic process.
Thus, the eavesdropper could always replay such 
a device's interaction with the entropy authority using
the messages collected while eavesdropping.
Thus, there is no hope for a completely 
deterministic device to generate keys
that a global eavesdropper cannot guess.

To circumvent this fundamental problem, we consider
instead a two-phase threat model:
\begin{enumerate}
\item {\em Set-up phase}: 
  In the set-up phase, 
  the device interacts with the entropy authority
  in a communication session 
  that the adversary {\bf cannot observe or modify}.
  In our key-generation protocols, this
  set-up communication session 
  consists of two round-trip interactions
  between the device and the entropy authority.
\item {\em Long-term communication phase}:
  After the set-up stage ends, the adversary 
  can observe and tamper with the traffic on 
  all network links.
\end{enumerate}

This threat model mimics SSH's implicit threat model:
an SSH client gets one ``free'' interaction
with the SSH server, in which the SSH server
sends its public key to the client.
As long as the adversary cannot tamper with 
this initial interaction, SSH protects
against eavesdropping and man-in-the-middle attacks.

Under the adversary model outlined above,
our key generation protocol provides the following
security properties:

\paragraph{Protects device from a malicious EA}
If the device has a strong entropy source,
then the entropy authority learns no useful
information about the device's secrets
during a run of the protocol.
We prove this property 
for the RSA protocol by demonstrating that
the entropy authority can simulate its interaction
with the device given only $O(\log k)$ bits of 
information about the RSA primes $p$ and $q$.
We prove this property 
for the DSA protocol by demonstrating 
that the entropy authority
can perfectly simulate its interaction with the device
given no extra information.

\paragraph{Protects device from CA and client}
An honest device interacting with an honest
entropy authority holds a strong key at 
the end of a protocol run, 
even if the device has a weak entropy source.
When the device later interacts with a certificate
authority (to obtain a public-key certificate) or
with a client (to establish a TLS session), the
device will send these parties a 
{\em strong} public key,
even if the device has weak entropy source.

\paragraph{Protects EA from malicious device}
If the entropy authority is honest, then the
keys generated by this protocol will be strong,
{\em even if} the device is dishonest.
Intuitively, this property states that 
a faulty device cannot discard the random values that
the entropy authority contributes to the key
generation process.

A consequence of this security property is 
that a malicious
device can never ``discredit'' an entropy authority
by tricking the entropy authority into signing a
key sampled from a low-entropy distribution.
If a device does try to have the entropy authority
sign a key sampled from a distribution with
low min-entropy (a {\em weak key}), 
the authority will detect that the
device misbehaved and will refuse to sign the key.

A nuance of this property is that the entropy authority
{\em will accept public keys that are invalid}, 
as long as the keys are sampled independently from
a distribution with high min-entropy.
In essence, a faulty device in our protocol
can create keys that are incorrect but random.
For example, the device could pick an composite number
as one of its RSA ``primes,'' 
or it could use any number of
other methods to ``shoot itself in the foot''
during the key generation process.
Since the device can {\em always} compromise its
own keypair (e.g., by publishing its secret key), we
do not attempt to protect a completely malicious device from itself.
Instead, we simply guarantee that any key that the 
entropy authority accepts will be drawn
independently from a distribution with high min-entropy. 

\subsection{Non-threats}
Our protocol addresses the threat posed by
devices that use weak entropy sources to generate
their cryptographic keys.
We explicitly {\em do not} address these other broad
vulnerability classes:
\begin{itemize}
\item \textbf{Adversarial devices.}
      If the device is completely adversarial, then 
      the device can easily compromise its own security (e.g.,
      by publishing its own secret key).
      Ensuring that such an adversarial device has
      high-entropy cryptographic keys is not useful,
      since {\em no} connection to such an adversarial
      device is secure.

\item \textbf{Faulty cryptography library (or OS).}
      Our protocol does not attempt to protect
      against cryptographic software that 
      is arbitrarily incorrect. 
      Incorrect software can introduce any number of
      odd vulnerabilities 
      (e.g., a timing channel that leaks the secret key),
      which we place out of scope.

\item \textbf{Denial of service.}
      We do not address denial-of-service attacks
        by the entropy authority or certificate authority. 
        In a real-world deployment, we expect that
        a device facing a denial-of-service attack 
        by a CA or entropy authority 
        could simply switch to using a new CA or EA.

\end{itemize}

\section{Protocol}
\label{sec:proto}
This section describes a number of standard cryptographic
primitives we require and then outlines our RSA and DSA
key generation protocols.

\subsection{Preliminaries}
Our key generation protocols use the following
cryptographic primitives.

\paragraph{Additively homomorphic commitments}
We require an additively homomorphic and
perfectly hiding commitment scheme.
Given a commitment to $x$ and a commitment to
$x'$, anyone should be able to construct
a commitment to $x+x' \pmod Q$ without knowing
the values $x$ or $x'$.
Our implementation uses Pedersen
commitments~\cite{pedersen92noninteractive}.
Given public generators $g,h$ of a group
$G$ with prime order $Q$,
and a random value $r \in \mathbb{Z}_Q$, a Pedersen
commitment to the value $x$ is
$\textsf{Commit}(x; r) = g^x h^r$.%
\footnote{
We denote the group order with capital ``$Q$'' 
to distinguish it from the
RSA prime $q$ in $n=pq$ that we use later on.} 
To ensure that the commitments are binding, 
participants must select the generators $g$ and $h$ 
in such a way that {\em no one} knows the discrete
logarithm $\log_g h$.

The commitment scheme is additively homomorphic
because the product of two commitments reveals
a commitment to $x+x' \pmod Q$ with randomness 
$r+r' \pmod Q$:
\begin{align*}
\textsf{Commit}(x+x'; r+r') = \textsf{Commit}(x; r)\textsf{Commit}(x'; r)
\end{align*}
We abbreviate $\textsf{Commit}(x; r)$ as 
$\textsf{Commit}(x)$ when the randomness used
in the commitment is not relevant to the exposition.

Of course, if the device has a weak entropy source the device will not be able
to generate a strong random value $r$ for use in the commitments.
We use randomized commitments to hide a device's secrets {\em in case}
the device does have a strong entropy source.
Since a device does not necessarily know whether its randomness source
is strong or weak, we must use the same constructions for devices
with both strong and weak entropy sources.

\paragraph{Public-key signature scheme}
We use a standard public-key signature scheme
that is existentially unforgeable~\cite{goldwasser88digital}.
We denote the signing and verification algorithms
by $\textsf{Sign}$ and $\textsf{Verify}$.

\paragraph{Multiplication proof for committed values}
We use a zero-knowledge proof-of-knowledge protocol that
proves that the product of two committed values is
equal to some third value.
For example, given commitments $C_x$ and $C_y$ to
values $x,y \in \mathbb{Z}_Q$, and a 
third product value $z \in \mathbb{Z}_Q$,
the proof demonstrates that $z = xy \pmod Q$.
We denote the prover and verifier algorithms
by $\pi \leftarrow \textsf{MulProve}(z, C_x, C_y)$ and
$\textsf{MulVer}(\pi, z, C_x, C_y)$.

We implement this proof using the method of
Cramer and Damg{\aa}rd~\cite{cramer98zero}. 
Written in Camenisch and Stadler's zero-knowledge proof
notation~\cite{camenisch97proof}, 
the multiplication proof proves the statement:
\begin{align*}
  \textsf{PoK}\{x, y, r_x, r_y, r_z:& \\
    C_x = g^x h^{r_x} &\land C_y = g^y h^{r_y} \land g^z h^{r_z} = (C_x)^y h^{r_z}\}
\end{align*}
Application of the Fiat-Shamir heuristic~\cite{fiat86prove}
converts this interactive zero-knowledge proof protocol
into a non-interactive proof in the random-oracle
model~\cite{bellare93random}.
When implemented using a hash function that outputs
length-$l$ binary strings, the non-interactive 
multiplication proof is $l + 3 \lceil \log_2 Q \rceil$ bits long.

\paragraph{Proof of knowledge for Pedersen commitments}
We use a non-interactive zero-knowledge proof-of-knowledge protocol that proves
that a committed value is equal to the discrete logarithm of a second group
element. 
Given a Pedersen commitment $C_x = g^x h^r$, a DSA public key $A$, 
and an auxiliary value $x'$, the proof demonstrates that:
\[ \textsf{PoK}\{ x, r : C_x = g^x h^r \land (A/g^{x'}) = g^x \} \]
We denote the prover and verifier algorithms
by $\pi \allowbreak \leftarrow \allowbreak \textsf{PedProve}(\allowbreak{}x, r, C_x, x', A)$ and
$\textsf{PedVer}(\pi, C_x, x', A)$.

We implement this proof using the method of Camenisch and
Stadler~\cite{camenisch97proof} and we apply the Fiat-Shamir
heuristic~\cite{fiat86prove} to convert the interactive proof into a
non-interactive proof in the random-oracle model~\cite{bellare93random}.
When implemented using a hash function that outputs
length-$l$ binary strings, the non-interactive 
proof is $l + 2 \lceil \log_2 Q \rceil$ bits long.

\paragraph{Common Public Keys}
We assume that all participants hold a signature 
verification public-key
for the entropy and certificate authorities.

\subsection{RSA Key Generation}
\label{sec:RSAgen}

The RSA key generation protocol takes place
between the device and the entropy authority.
At the end of a successful run of the
protocol, the device holds
an RSA public modulus $n$ that
is independently sampled from a distribution over
$\mathbb{Z}$ that has high min-entropy
and the device also holds the entropy
authority's signature $\sigma$ on this modulus.

In Section~\ref{sec:sec:rsa} we prove that the
RSA protocol satisfies the security properties defined in
Section~\ref{sec:model:threat}.
In Section~\ref{sec:impl}, we describe how a
device could use this protocol to generate
a self-signed X.509 certificate and how to 
integrate this protocol with today's certificate
authority infrastructure. 

\paragraph{Parameters}
Before the protocol begins, the device and
entropy authority must agree on a set of
common system parameters. 
These parameters include the security 
parameter $k$, which determines the
bit-length of the RSA primes $p$ and $q$.
For a given value of $k$, the
participants must also agree on a 
prime-order group $G$ used for 
the Pedersen commitments and zero-knowledge proofs.
The prime order $Q$ of the group $G$ must be somewhat larger
than the largest RSA modulus~$n$ generated
by the protocol, so the participants should 
let $Q \approx 2^{2k+100}$.
In addition, participants must agree on two
generators $g$ and $h$ of the group $G$, such that 
{\em no one} knows the discrete logarithm $\log_g h$.
In an implementation of the protocol, 
participants could generate $g$ and $h$ 
using a shared public hash function.
Finally, they also agree on a small number $\Delta$ (e.g., $\Delta = 2^{16}$)
discussed in Section~\ref{sec:findingprimes} below.

Since the parameters contain only public values,
all devices and entropy authorities could share
one set of parameters (per key size).

%%%%%%%%%%%%%%%%%%%%%%%%%%
% These commands are used in the
% protocol figures below

\newcommand{\figureArrow}[4]{
\multicolumn{3}{c}{
  \begin{picture}(80,20)
  \put(40,8){\makebox(0,0){#1}}
  \put(#2,0){\vector(#3,#4){80}}
  \end{picture}
}\vspace{-10pt}}

\newcommand{\figureRightArrow}[1]{\figureArrow{#1}{0}{1}{0}}
\newcommand{\figureLeftArrow}[1]{\figureArrow{#1}{80}{-1}{0}}

\newcounter{protoCounter}

\newcommand{\ProtoStep}{\refstepcounter{protoCounter}%
  \underline{Step \arabic{protoCounter}}}

\begin{figure}
\begin{framed}
\centering
\begin{tabular*}{\textwidth}{l @{\extracolsep{\fill}} c r}
\textbf{Device} & & \textbf{Entropy Authority}\\
\hline
\\
\ProtoStep\label{rsa:commit}\\
choose $x,y \xleftarrow{R} [2^k, 2^{k+1})$  \\
$C_x \leftarrow \textsf{Commit}(x)$  \\
$C_y \leftarrow \textsf{Commit}(y)$  \\

\figureRightArrow{send $C_x, C_y$}\\[1mm]

&&\ProtoStep\label{rsa:ea}\\
& \multicolumn{2}{r}{choose $x',y' \xleftarrow{R} [2^k, 2^{k+1})$}\\[1mm]

\figureLeftArrow{send $x', y'$}\\[1mm]

\ProtoStep\label{rsa:delta}\\
abort if $x',y' \not\in [2^k, 2^{k+1})$ \\
choose $0 \leq \delta_x, \delta_y < \Delta $ s.t.\\
\ \ $p \leftarrow x+x'+\delta_x$\\
\ \ $q \leftarrow y+y'+\delta_y$\\
\ \ are distinct primes, \\
\ \ $\gcd(p-1, e) = 1$, and\\
\ \ $\gcd(q-1, e) = 1$\\
abort if no such $\delta_x,\delta_y$ exist\\[2mm]
let $n \leftarrow  pq$\\ 
$C_p \leftarrow  C_x g^{x' + \delta_x}$\\
$C_q \leftarrow  C_y g^{y' + \delta_y}$\\
$\pi \leftarrow \textsf{MulProve}(n, C_p, C_q)$\\[1mm]

\figureRightArrow{send $n, \delta_x, \delta_y, \pi$}\\[1mm]

&&\ProtoStep\label{rsa:easign}\\
&&$C_p \leftarrow  C_x g^{x'+\delta_x}$\\
&&$C_q \leftarrow  C_y g^{y'+\delta_y}$\\ 
\\
%&&abort if $\delta_x,\delta_y \notin [0, \Delta)$ or\\
&\multicolumn{2}{r}{abort if $\delta_x,\delta_y \notin [0, \Delta)$ or}\\
&&$n \notin[2^{2k+2}, 2^{2k+4})$ or\\
&&$\textsf{MulVer}(\pi, n, C_p, C_q) \neq 1$\\
\\
&&$\sigma \leftarrow \textsf{Sign}_\textrm{EA}(n)$\\[1mm]

\figureLeftArrow{send $\sigma$}\\[1mm]

\ProtoStep\\
abort if \\
\quad$\textsf{Verify}_\textrm{EA}(\sigma, n) \neq 1$\\
\\
public key is $\langle n, e, \sigma \rangle$\\
\end{tabular*}
\caption{RSA Key Generation Protocol}
\label{fig:proto-rsa}
\end{framed}
\end{figure}

\paragraph{Protocol Description}
Figure~\ref{fig:proto-rsa} presents our RSA key 
generation protocol.
To generate an RSA key, the device first selects
$k$-bit integers $x$ and $y$ and
sends randomized commitments to these values
to the entropy authority.
The entropy authority then selects $k$-bit
integers $x'$ and $y'$ at random and returns these values to the
device.

After confirming that $x'$ and $y'$ are of the 
correct length, the device searches for offsets
$\delta_x$ and $\delta_y$ such that the sums
$p=x+x'+\delta_x$ and $q=y+y'+\delta_y$ are suitable
RSA primes.
That is, $p$ and $q$ must be distinct primes such
that $\gcd(p-1, e) = 1$ and $\gcd(q-1,e) = 1$,
where $e$ is the RSA encryption exponent.
The device then sets $n \gets pq$, 
generates commitments to $p$ and $q$,
and produces a non-interactive zero-knowledge proof of
knowledge $\pi$ that the product of the committed values
is equal to $n$.
The device sends $n$, $\delta_x$, $\delta_y$, and the
the proof $\pi$ to the entropy authority.

The validity of the proof $\pi$ and the fact
that the $\delta$ values are less than
$\Delta$ convince the entropy authority 
that the device's RSA primes $p$ and $q$ incorporate
the authority's random values $x'$ and $y'$. 
At this point, the authority signs the modulus $n$ and
returns it to the device.

\subsubsection{Finding Primes $p$ and $q$}
\label{sec:findingprimes}

To maintain the security of the protocol, 
it is important that the $\delta$ values 
chosen in Step~\ref{rsa:delta} are 
relatively small---if the device could pick
an arbitrarily large $\delta_x$ value, for example,
the device could set $\delta_x \leftarrow -x'$,
which would make $p = x + x' - x' = x$, 
thereby cancelling out the effect of the 
random value $x'$ contributed by 
the entropy authority.
To prevent the device from ``throwing away''
the entropy authority's entropy in this way,
we require that the $\delta$ values be less than
some maximum value $\Delta$, which depends on the security parameter $k$.

Picking the size of $\Delta$ requires some care:
if $\Delta$ is too small, then there may be no suitable
prime $p$ in the range $[x+x', x+x'+\Delta)$,
and the device will have to run the protocol 
many times before it finds suitable primes $p$ and $q$.
The value $\Delta$ should be large enough that 
the protocol will succeed with overwhelming 
probability, but not so large that the device
can pick $n = pq$ arbitrarily.

Following Juels and Guajardo~\cite{juels02verifiable}, 
if the density of primes is $d_\textrm{prime}$
and the density of these special primes (with $\gcd(p-1,q-1,e) = 1$)
is $d_\textrm{special}$, 
we conjecture that $d_\textrm{special}/d_\textrm{prime} = (e-1)/e$,
where $e$ is the RSA encryption exponent (a small odd prime constant).
Under this conjecture and the Hardy-Littlewood~\cite{hardy23some}
conjecture, Juels and Guajardo demonstrate that the probability
that there is {\em no suitable prime} in the interval $[x+x', x+x'+\Delta)$
is at most $\exp(-\lambda)$ when $\Delta = \lambda \ln(x+x') (\frac{e}{e-1})$
as $(x+x') \rightarrow \infty$.
To make this conjecture concrete: if we take $(x+x') \approx 2^{1024}$, the
RSA encryption exponent $e = 65537$, and require a failure
probability of at most $2^{-80}$, then we should set 
$\Delta \approx 2^{16}$.
In the very unlikely case that the device fails to find
primes $p$ and $q$ in the right range, the device aborts
and re-runs the protocol from the beginning.

\subsubsection{Eliminating Information Leakage}
\label{sec:elim}

The values $\delta_x$ and $\delta_y$ sent to the 
entropy authority in Step~\ref{rsa:delta} of the protocol
leak some information about $p$ and $q$ to the 
entropy authority.
In particular, the authority learns that 
the prime gap before $p$ (resp. $q$) has a width of
at least $\delta_x$ (resp. $\delta_y$).
We argue in Section~\ref{sec:sec:rsa} the 
entropy authority cannot use this leakage to help it
factor the modulus $n$.

Even so, it is possible to modify the protocol
to completely eliminate this information leakage
at some performance cost.
One way to modify the protocol is to require
that $\delta_x = \delta_y = 0$ in Step~\ref{rsa:delta}
of the protocol.
If the values $x+x'$ and $y+y'$ are not prime, the device
aborts the protocol and restarts it from the beginning.
Since the probability that a random $k$-bit number is
a suitable prime is near $1/k$ for large $k$, the device will
have to run the protocol approximately $k^2$ 
times before it succeeds.

To reduce the number of communication
rounds required for this revised protocol, 
the device could run the $k^2$ protocol iterations in parallel.
The device would send $k^2$-length vectors
of commitments to random values $\vec{x}, \vec{y}$ in 
Step~\ref{rsa:commit} of the protocol and the
entropy authority would return two vectors $\vec{x}', \vec{y}'$
in Step~\ref{rsa:ea} of the protocol, with each
vector having length $k^2$.
The device would then iterate over the vectors
until it finds an $i$ such that
$p \gets x_i+x'_i$ and $q \gets y_i+y'_i$ 
are distinct primes and $\gcd(p - 1, q - 1, e) = 1$.
If the device fails to find such primes, it would
abort and repeat the process.

\subsection{DSA Key Generation}
The DSA key generation protocol, 
which we present in Figure~\ref{fig:proto-dsa}, takes 
place between a device and the entropy authority.

\paragraph{Parameters}
We assume that, before the start of the
protocol, participants have agreed
upon an order-$Q$ group $G$ used in the
DSA signing process.
If the device uses the elliptic-curve variant of DSA (EC-DSA),  
then the group $G$ will be an
elliptic curve group selected, for example,
from one of the NIST standard curves~\cite{gallagher09fips}.
Participants must also agree upon two public generators,
$g$ and $h$, of the group $G$ such that {\em no one}
knows the discrete logarithm $\log_g h$.

While we expect most new devices to primarily use EC-DSA keys,
even new devices may also need to generate finite-field DSA
keys for interoperability with legacy devices.
When using the finite-field variant of DSA,
the device may have to generate the parameters
of the finite-field DSA group (a prime modulus $p$,
a group order $Q$, and a generator $g$) 
in addition to its keypair.
To do this, the device and entropy authority could
agree on a {\em domain parameter seed} using
a coin-flipping protocol~\cite{blum83coin} 
and then use this shared seed to 
generate DSA parameters using the verifiable 
generation method specified in the Digital Signature
standard~\cite[Appendix A]{gallagher09fips}.

\paragraph{Protocol Description}
To begin the key generation process
depicted in Figure~\ref{fig:proto-dsa}, the device
picks a random value $x \in \mathbb{Z}_Q$ and
generates a randomized commitment to $x$.
In the event that the device has a strong entropy source,
the use of a randomized commitment prevents
the entropy authority from learning the device's
secret $x$.
The device sends this commitment to the entropy
authority.

Upon receiving the device's commitment, the entropy
authority chooses a random value $x' \in \mathbb{Z}_Q$ 
and returns this value to the device.
The device sets its private key $a \gets x+x' \mod Q$
and sets its public key to $A \gets g^a$.
The device then sends its public key $A$ along with a 
non-interactive proof of correctness $\pi$ to the entropy
authority.

The entropy authority verifies the proof $\pi$, which convinces
the entropy authority that $A$ is equal to $g^{x+x'}$.
The entropy authority then signs the device's public
key $A$ and returns it to the device.

\setcounter{protoCounter}{0}
\begin{figure}
\begin{framed}
\centering
\begin{tabular*}{\textwidth}{l @{\extracolsep{\fill}} c r}
\textbf{Device} & & \textbf{Entropy Authority}\\
\hline \\

\ProtoStep\\
choose $x, r \xleftarrow{R} \mathbb{Z}_Q $ & & \\
$C_x \leftarrow \textsf{Commit}(x; r)$ & & \\

\figureRightArrow{send $C_x$}\\[1mm]

&&\ProtoStep\\
&&choose $x' \xleftarrow{R} \mathbb{Z}_Q $\\[1mm]

\figureLeftArrow{send $x'$}\\[1mm]

\ProtoStep\label{dsa:pub}\\
$a \leftarrow x + x' \bmod Q$\\
$A \leftarrow g^a$\\
\multicolumn{3}{l}{ 
$\pi \leftarrow \textsf{PedProve}(x, r, C_x, x', A)$}\\

\figureRightArrow{send $A, \pi$}\\[1mm]

&&\ProtoStep\label{dsa:eacheck}\\
  && abort if\\
  \multicolumn{3}{r}{$\textsf{PedVer}(\pi, C_x, x', A) \neq 1$}\\
  \\
&&$\sigma \leftarrow \textsf{Sign}_\textrm{EA}(A)$\\

\figureLeftArrow{send $\sigma$}\\[1mm]

\ProtoStep\\
abort if $\textsf{Verify}_\textrm{EA}(\sigma, A) \neq 1$&&\\
\\
public key is $\langle A, \sigma \rangle$\\

\end{tabular*}
\caption{DSA Key Generation Protocol}
\label{fig:proto-dsa}
\end{framed}
\end{figure}

\newcommand{\mypar}[1]{\medskip \noindent {\bf #1.}}
\newcommand{\adv}{{\mathcal A}}
%\newcommand{\deq}{\mathrel{\mathop:}=}
% Using \gets for consistency with rest of paper
\newcommand{\deq}{\mathrel\gets}

\section{Security Analysis}
\label{sec:sec}

This section presents proofs that 
the RSA and DSA key generation
protocols satisfy the security properties
described in Section~\ref{sec:model}.

\subsection{RSA Protocol}
\label{sec:sec:rsa}

\subsubsection{Protects Device from a Malicious EA}

We first show that when the device has a strong entropy
source, a malicious EA learns no useful information
about the device's resulting RSA secret key.

First, let us define a standalone RSA modulus generation
algorithm which does not interact with an EA.   
The key generator takes as input a security parameter $k$ and 
lower bounds $p_{\text{min}}$ and $q_{\text{min}}$ 
on the RSA primes $p$ and $q$.

\begin{minipage}{\columnwidth}
\begin{tabbing}
0000 \= 0000 \= 0000 \= \kill
\>$\textsf{PrimeGen}(k,p_{\text{min}})$: \\
\>\>  choose a random $x$ in $[2^k, 2^{k+1}]$ \\
\>\>  find the smallest prime $p$ s.t. $p \geq p_{\text{min}}+x$  \\
\>\>\> and s.t. $\gcd(p-1, e) = 1$\\
\>\>  output $p$ \\[2mm]
\>$\textsf{RSAKeyGen}(k,p_{\text{min}},q_{\text{min}})$:  \\
\>\>  $p \gets \textsf{PrimeGen}(k,p_\textrm{min})\ , \ q \gets \textsf{PrimeGen}(k, q_\textrm{min})$ \\
\>\>  output $n \gets p \cdot q$
\end{tabbing}
\end{minipage}
\\
%\vspace{\baselineskip}

We say that a modulus generator outputs a {\em secure
distribution of RSA moduli~$n$} if the resulting family of RSA functions
$x \to x^e \bmod n$ 
is a family of trapdoor one-way functions
(where $e$ is the RSA encryption exponent, 
a small prime constant).
A secure
modulus generator is sufficient for use in standard RSA encryption and RSA
signature systems.

We use the following {\em RSA assumption} about algorithm
$\textsf{RSAKeyGen}$ above: algorithm
$\textsf{RSAKeyGen}(k,p_{\text{min}},q_{\text{min}})$ outputs a
secure distribution of RSA moduli for all $p_{\text{min}}$ and
$q_{\text{min}}$ in the interval $[2^k, 2^{k+1})$.

The following theorem shows that even when interacting with a
malicious EA, the RSA key generation protocol in
Figure~\ref{fig:proto-rsa} outputs a secure distribution of RSA
moduli.  Furthermore, the protocol leaks at most $O(\log k)$ bits of
information about the prime factors to the EA.  This small leak does
not harm security since if it were possible to invert the RSA function
given the few leaked bits then it would also be possible to do it
without, simply by trying all possible values for the leaked bits
in time polynomial in $2^{\log k} = k$.
Moreover, if desired this small leak can be eliminated at the cost of
more computation, as explained in Section~\ref{sec:elim}.

\begin{theorem}
\label{thm:rsa-device}
Suppose the device has a strong entropy source (i.e., the device can
repeatedly sample independent uniform bits in $\{0,1\}$).  Then for all EA, the
protocol in Figure~\ref{fig:proto-rsa} generates a secure
distribution of RSA moduli assuming the RSA assumption above.
Furthermore, EA's view of the protocol can be simulated with
at most $O(\log k)$ advice bits with high probability.
\end{theorem}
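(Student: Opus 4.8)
The plan is to prove the two claims separately, in both cases exploiting the perfect hiding of the Pedersen commitments to neutralize the malicious EA's adaptivity. First I would observe that, because the commitment scheme is perfectly hiding, the messages $C_x,C_y$ that the EA sees in Step~\ref{rsa:commit} are uniform in $G$ and hence independent of the device's secrets $x,y$. Consequently the EA's reply $(x',y')$ is independent of $(x,y)$, so I may condition on any fixed pair $(x',y')$; since the honest device aborts unless $x',y'\in[2^k,2^{k+1})$, only pairs in this range matter. For such a fixed pair, the honest device with a strong source draws $x$ uniformly and sets $p$ to the smallest suitable prime at or above $x+x'$, which is exactly the output of $\textsf{PrimeGen}(k,x')$ (identifying $p_{\text{min}}$ with $x'$), and likewise for $q$; hence the modulus $n=pq$ is distributed as $\textsf{RSAKeyGen}(k,x',y')$. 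The only discrepancies are the $\Delta$-cutoff, which discards runs whose smallest suitable gap exceeds $\Delta$, and the distinctness requirement $p\neq q$; by the Hardy--Littlewood prime-gap estimate of Section~\ref{sec:findingprimes} the former is an $\exp(-\lambda)$-probability event and the latter is an $O(2^{-k})$-probability event, so the protocol's modulus distribution is statistically close to $\textsf{RSAKeyGen}(k,x',y')$. The RSA assumption makes each such distribution secure, and since the protocol's output is a mixture of these over the EA's secret-independent choice of $(x',y')$, a standard averaging argument shows the mixture is one-way: an inverter with non-negligible advantage on the mixture has non-negligible advantage on some fixed $(x',y')$ component, contradicting the assumption.

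Next I would build a simulator $S$ that reproduces the EA's view from the public modulus $n$ together with the $O(\log k)$-bit advice string $(\delta_x,\delta_y)$. $S$ samples $C_x,C_y$ as uniform elements of $G$ (identically distributed to the real commitments by perfect hiding), runs the EA to obtain $x',y'$, reconstructs $C_p,C_q$ from the advice exactly as in Step~\ref{rsa:easign}, and produces $\pi$ using the zero-knowledge simulator for the Cramer--Damg{\aa}rd proof rather than a real witness. Indistinguishability from the real view follows component by component: the commitments match by perfect hiding, $x',y'$ match because the EA sees identically distributed commitments, $n,\delta_x,\delta_y$ are supplied directly, and the simulated $\pi$ is indistinguishable from the honest proof by the zero-knowledge property in the random-oracle model. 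Because $\delta_x,\delta_y\in[0,\Delta)$ with $\log_2\Delta=\log_2\lambda+\log_2\ln(x+x')+O(1)=O(\log k)$, the advice is genuinely $O(\log k)$ bits, and the ``with high probability'' clause refers to conditioning on the overwhelmingly likely event that the device finds suitable gaps below $\Delta$.

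Finally I would combine the parts into the security conclusion: any efficient algorithm inverting the RSA function given the EA's entire view yields, via $S$, one that inverts given only $n$ and $O(\log k)$ advice bits, and hence one that inverts given $n$ alone by enumerating all $2^{O(\log k)}=\mathrm{poly}(k)$ advice strings and testing each candidate inverse against the challenge---contradicting the one-wayness of Paragraph~1. I expect the main obstacle to be making this simulation-plus-enumeration reduction fully rigorous against an \emph{adaptive, malicious} EA: one must ensure the challenge modulus handed to the reduction is drawn from $\textsf{RSAKeyGen}(k,x',y')$ for the very $(x',y')$ the EA outputs on the simulator's commitments, and argue that perfect hiding together with the zero-knowledge simulation of $\pi$ decouples the (non-binding) commitments from the claimed factorization, so that $n$ plus the guessed advice suffices to reconstruct a correctly distributed view. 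The secondary technical point is pinning down $\lambda$, and hence $\Delta$, so that the cutoff-failure probability is negligible in $k$ while keeping $\log_2\Delta=O(\log k)$.
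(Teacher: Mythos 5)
Your first part matches the paper's argument for the ``secure distribution'' claim essentially step for step: perfect hiding of the Pedersen commitments makes the EA's response $(x',y')$ independent of $(x,y)$, so the protocol's modulus is distributed as $\textsf{RSAKeyGen}(k,x',y')$ with $(x',y')$ drawn by running the EA on uniform group elements, and the RSA assumption (stated for \emph{all} $p_{\text{min}},q_{\text{min}}$ in range) absorbs the mixture; your explicit averaging step and your accounting for the $\Delta$-cutoff and the $p\neq q$ condition are, if anything, more careful than the paper, and your closing enumeration-over-advice reduction is exactly the paper's remark preceding the theorem. The genuine gap is in your simulator. Component-by-component matching of marginals does not give indistinguishability of the \emph{joint} transcript distribution, and the joint distribution is where the work lies: in every real transcript the modulus satisfies $n \geq (x'+2^k)(y'+2^k)$, because $p = x+x'+\delta_x$ with $x \geq 2^k$, $\delta_x \geq 0$, and similarly for $q$. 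Your simulator receives $n$ from a real run but draws \emph{fresh} commitments and obtains a \emph{fresh} response $(x'^s,y'^s)$ that is independent of $n$; against any EA whose response varies with the commitments (say, $(x',y')$ derived from a hash of $(C_x,C_y)$), the simulated pair violates the range constraint with constant probability while the real pair never does, so a trivial distinguisher checks $n \geq (x'^s+2^k)(y'^s+2^k)$ and wins. The paper closes exactly this hole with a rejection-sampling loop: resample $C_x,C_y$ and re-query $\mathcal{A}_1$ \emph{until} $n \in \left[(x'+2^k)(y'+2^k),\,(x'+2^{k+1})(y'+2^{k+1})\right)$. That loop is also what forces the paper's concluding discussion of expected polynomial running time against EAs that let the protocol complete only with negligible probability---an issue your write-up never confronts (you flagged the adaptivity problem as an anticipated obstacle, but your simulator as stated does not resolve it).

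Two secondary divergences are worth noting. The paper's advice is not $(\delta_x,\delta_y)$ but $\Delta_p = \min(p-\textsf{pre}(p),\Delta)$ and $\Delta_q = \min(q-\textsf{pre}(q),\Delta)$---a function of $n$'s factorization alone---and the simulator \emph{resamples} $\delta_x \xleftarrow{R} [0,\Delta_p)$, $\delta_y \xleftarrow{R} [0,\Delta_q)$; this is justified by observing that, conditioned on the run producing the prime $p$, the variable $x+x'$ is uniform on $(\textsf{pre}(p),p]$ (or on $(p-\Delta,p]$ when the gap exceeds $\Delta$), making $\delta_x$ uniform on $[0,\Delta_p)$. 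Your alternative of handing over the run's actual $(\delta_x,\delta_y)$ is salvageable once the rejection step is added---the deltas are marginally uniform on the same intervals---but you would still need to argue they remain correctly distributed jointly with the freshly sampled $(x'^s,y'^s)$, which your draft does not do. Finally, your simulated transcript stops at $\pi$: it omits the final query $\sigma \gets \mathcal{A}_2(n,\delta_x,\delta_y,\pi)$ that the paper uses to complete the EA's view.
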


\begin{proof}
Let ${\adv}$ be a malicious EA that, given random
commitments $C_x,C_y$, outputs $(x',y') \gets \adv(C_x,C_y)$.
Then, since Pedersen commitments are information theoretically
hiding, the protocol in Figure~\ref{fig:proto-rsa} outputs a 
modulus $n$ sampled from the following distribution:
\begin{tabbing}
0000 \= 0000 \= \kill
\> choose random $C_x, C_y \xleftarrow{R} \mathbb{Z}_Q$ \\
\> $(p_{\text{min}},q_{\text{min}}) \gets \adv(C_x,C_y)$ \\
\> output $\textsf{RSAKeyGen}(k,p_{\text{min}},q_{\text{min}})$
\end{tabbing}
Therefore, by the RSA assumption about 
algorithm $\textsf{RSAKeyGen}$ the protocol 
generates a secure distribution of RSA moduli.

Next, to argue that the protocol leaks at most $O(\log k)$ bits of
information about the prime factors with high probability,
we construct a simulator~$S$
that simulates the transcript of a successful run of the protocol with
$\adv = \langle \adv_1, \adv_2 \rangle$ 
given only~$n$ and an additional $O(\log k)$ bits of
information.  This will prove that given $n$, the protocol leaks only
$O(\log k)$ additional bits.
The protocol transcript consists of 
\[  \langle C_x,C_y, x', y', n, \delta_x, \delta_y, \pi, \sigma \rangle \]
where $n = pq$ and $p = x+x'+\delta_x,\ \ q=y+y'+\delta_y$ for some $x,y$.
For a prime $p$ let $\textsf{pre}(p)$ be the prime immediately preceding~$p$,
such that $\textsf{pre}(p)-1$ is relatively prime to the RSA encryption exponent $e$. 
The simulator $S$ takes three arguments as input:
the modulus $n=pq$ produced by a successful run of the protocol
and the quantities 
\[  \Delta_p = \text{min}(p - \textsf{pre}(p),\ \Delta)
       \quad\text{;}\quad
    \Delta_q = \text{min}(q - \textsf{pre}(q),\ \Delta)
\]

The simulator works as follows:
\begin{tabbing}
000 \= 000 \= \kill
$S(n,\Delta_p, \Delta_q)$: \\
\> repeat: \\
\>\> set $C_x, C_y \xleftarrow{R} \mathbb{Z}_Q$ \\
\>\> set $(x',y') \gets \adv_1(C_x, C_y)$ \\
\> until $n \in \left[(x'+2^k)(y'+2^k),\ (x'+2^{k+1})(y'+2^{k+1})\ \right)$ \\
\> set $\delta_x \xleftarrow{R} [0,\Delta_p), \; \delta_y \xleftarrow{R} [0,\Delta_q)$ \\
\> use the NIZK simulator to simulate a proof $\pi$\\
\>\>    that $n = (x+x'+\delta_x)(y+y'+\delta_y)$ \\
\>\>    where $x$ and $y$ are the values committed in $C_x,C_y$\\
\> set $\sigma \gets \adv_2(n, \delta_x, \delta_y, \pi)$\\
\> output the simulated transcript: \\
\>\>    $\langle C_x,C_y, x', y', n, \delta_x, \delta_y, \pi, \sigma \rangle$
\end{tabbing}
The simulator $S$ properly simulates the Pedersen commitments $C_x,C_y$
and the quantities $x',y'$, given that the protocol generated the modulus $n$.
Similarly, given that $n=pq$ was the output we know that the random
variable $x+x'$ is uniformly distributed in the interval $(\textsf{pre}(p),p]$
whenever $p-\text{pre}(p)<\Delta$ and is uniform in $(p-\Delta,p]$ otherwise.
Either way, the value of $\delta_x$ is uniform in $[0,\Delta_p)$.
Hence $S$ properly simulates $\delta_x$ and similarly $\delta_y$.   
Finally, $\pi$ is properly simulated using the ZK knowledge simulator for 
a proof of Pedersen products. 

%%% Need to explain better:  
%%%   - why does first while loop in S complete in expected poly. time.
%%%   - is the condition for terminating the loop sufficient for
%%%     ensuring the existence of x',y'.

We explained in Section~\ref{sec:RSAgen} that $\Delta_p$ and $\Delta_q$
are $O(k)$ in size, and therefore the protocol leaks 
at most $O(\log k)$ bits of information 

\medskip

An important technical point is that 
a malicious entropy authority could send
an invalid signature $\sigma$ in the last step of the protocol,
which would cause the device to abort the protocol.
If the entropy authority allows the protocol to complete with only negligible
probability, then the simulator will have to rewind the entropy authority 
a super-polynomial number of times, and the simulator will not necessarily
succeed in polynomial time.

If an entropy authority only allows the protocol to complete with negligible
probability, however, a device will only be able to generate a key using one such
entropy authority with negligible probability. 
Thus a device is extremely unlikely to ever use such a key in practice.
Against adversaries that {\em do} allow the protocol to complete with
non-negligible probability, the simulator will always run in expected polynomial time.
\end{proof}

\subsubsection{Protects Device from the CA and Client}

Having established that the protocol protects a high-entropy
device from the entropy authority, we demonstrate that
an honest device interacting with an honest
entropy authority holds a strong key at 
the end of a protocol run, even if the device has a weak entropy source.

\begin{theorem}
When interacting with an honest EA, the RSA protocol in
Figure~\ref{fig:proto-rsa} generates a secure distribution of RSA
moduli, assuming that algorithm $\textsf{RSAKeyGen}$ outputs
a secure distribution of RSA moduli.
\end{theorem}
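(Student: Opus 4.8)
The plan is to show that, when the EA is honest, the protocol's output coincides---up to negligible statistical distance---with $\textsf{RSAKeyGen}(k,x,y)$ once we condition on the device's choice of $x$ and $y$, and then to invoke the RSA assumption, which by hypothesis covers every $(p_{\text{min}},q_{\text{min}})$ in $[2^k,2^{k+1})$. The crucial observation is that the EA's uniform contribution supplies the randomness that $\textsf{PrimeGen}$ would normally sample internally, while the device's (possibly low-entropy) value plays only the role of the lower bound $p_{\text{min}}$.

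First I would fix the honest device's pair $(x,y)\in[2^k,2^{k+1})^2$ and trace how $p$ is produced. The honest EA draws $x'$ uniformly from $[2^k,2^{k+1})$, so $x+x'$ is uniform on $[x+2^k,\,x+2^{k+1})$; the honest device then takes the smallest offset $\delta_x$ for which $p=x+x'+\delta_x$ is prime with $\gcd(p-1,e)=1$. This is exactly $\textsf{PrimeGen}(k,x)$ with $p_{\text{min}}=x$ and internal randomness $x'$. The same holds for $q$ with $q_{\text{min}}=y$, so, conditioned on $(x,y)$, the modulus $n=pq$ is distributed as $\textsf{RSAKeyGen}(k,x,y)$.

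Next I would dispose of the two mismatches between the protocol and $\textsf{PrimeGen}$. The protocol caps the search at $\delta_x,\delta_y<\Delta$ and aborts otherwise, whereas $\textsf{PrimeGen}$ searches without a cap, and the protocol additionally insists that $p\neq q$. By the prime-gap estimate of Section~\ref{sec:findingprimes}, the probability that the smallest suitable prime lies beyond $x+x'+\Delta$ is at most $\exp(-\lambda)$, which is negligible for the chosen $\Delta$, and two independent $k$-bit primes collide only with negligible probability, so the coupling of $\delta_x$ and $\delta_y$ through the distinctness test is irrelevant up to negligible probability. Conditioning on the (overwhelmingly likely) successful, collision-free runs therefore perturbs the distribution by only a negligible statistical distance, leaving the conditional output statistically close to $\textsf{RSAKeyGen}(k,x,y)$, a secure distribution by the RSA assumption since $x,y\in[2^k,2^{k+1})$.

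Finally I would lift this to the unconditional claim. The device's weak entropy source induces some distribution $D$ over $(x,y)$, and the protocol's output is the $D$-mixture of the distributions $\textsf{RSAKeyGen}(k,x,y)$. An averaging reduction closes the argument: any PPT inverter with advantage $\epsilon(k)$ against the protocol has advantage at least $\epsilon(k)$ against $\textsf{RSAKeyGen}(k,x^\ast,y^\ast)$ for some $(x^\ast,y^\ast)\in[2^k,2^{k+1})^2$ in the support of $D$, so non-negligible $\epsilon$ would yield a parameter family $(p_{\text{min}}(k),q_{\text{min}}(k))$ against which $\textsf{RSAKeyGen}$ is insecure. The main obstacle is precisely this last step: a mixture of individually secure distributions need not be secure, so the reduction only goes through if the RSA assumption is read as holding \emph{uniformly} over all admissible parameter families. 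I would therefore state explicitly that the RSA assumption quantifies over every sequence $(p_{\text{min}}(k),q_{\text{min}}(k))$ with $p_{\text{min}}(k),q_{\text{min}}(k)\in[2^k,2^{k+1})$, which is the natural reading and exactly what the mixture argument requires.
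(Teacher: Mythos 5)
Your proposal is correct and follows essentially the same route as the paper's (very terse) proof: identify the device's $(x,y)$ with the inputs $(p_{\text{min}},q_{\text{min}})$, observe that the honest EA's uniform $x',y'$ play the role of $\textsf{PrimeGen}$'s internal randomness, and invoke the RSA assumption, which the paper indeed intends to hold uniformly over all admissible $(p_{\text{min}},q_{\text{min}})$. The paper silently elides the points you handle explicitly---the negligible statistical discrepancy from the $\Delta$-cap and the distinctness test, and the averaging step over the device's distribution on $(x,y)$---so your version is a more careful rendering of the same argument rather than a different one.
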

\begin{proof}
Let $\adv$ be a device honestly following the protocol, but one that
may have a weak entropy source.  We let $\adv()$ denote the $x,y$ chosen by the
device in Step~\ref{rsa:commit}.  Given an honest EA, the protocol in
Figure~\ref{fig:proto-rsa} outputs a modulus~$n$ sampled from the
following distribution:
\begin{tabbing}
0000 \= 0000 \= \kill
\> $(p_{\text{min}},q_{\text{min}}) \gets \adv()$ \\
\> output $\textsf{RSAKeyGen}(k,p_{\text{min}},q_{\text{min}})$
\end{tabbing}
By the RSA assumption about algorithm $\textsf{RSAKeyGen}$ the protocol 
generates secure RSA moduli.
\end{proof}

\subsubsection{Protects EA from a Malicious Device}
\label{sec:security}

\abbr{
%%%%%%%   camera ready text   %%%%%

Suppose the device is {\em dishonest} and its goal is to discredit the
entropy authority.  The device may try to cause the EA to sign a
modulus $n$ in Step~\ref{rsa:easign} of the protocol where $n$ is sampled from a low
entropy distribution.  For example, the two prime factors of $n=pq$
may look non-random (e.g. their binary representation may end in many
1's) or $n$ may have a non-trivial GCD with another public RSA
modulus.  The EA's signature would then serve as incriminating evidence
that the ``random'' values $x'$ and $y'$ the EA contributed
to the protocol in Step~\ref{rsa:ea} were not sampled 
from the uniform distribution over $[2^k, 2^{k+1})$.

Note, however, that if the modulus $n$ output by the device is an
ill-formed RSA modulus---say $n$ is not a product of two primes---%
then clearly the EA is not at fault since the device did not properly
generate $n$.  Therefore the EA need not worry about invalid moduli.
It only cares about not signing low-entropy moduli.

The following theorem shows that an honest EA will never sign 
a low-entropy modulus.   

\begin{theorem}
Consider an honest entropy authority interacting with a
malicious polynomial-time device.  Suppose that the RSA modulus $n$ signed 
by an honest entropy authority in Step~4 is a product of two distinct
primes $n=pq$ each in the range $[2^{k+1},2^{k+2})$.  
Then each of the primes is sampled
from a distribution with at least $k-2 - d \log(k)$ bits of
min-entropy for some absolute constant $d$, even when
conditioned on the other prime.
\end{theorem}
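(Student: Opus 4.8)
The plan is to fix the device's random coins and analyze the protocol over the honest EA's randomness $x',y'$, which are uniform and independent in $[2^k,2^{k+1})$; this is without loss of generality for a min-entropy lower bound, since conditioning on the coins can only decrease min-entropy. First I would observe that $C_x$ and $C_y$ are sent in Step~\ref{rsa:commit}, before the EA reveals $x'$ and $y'$. Since Pedersen commitments are binding and the multiplication proof $\pi$ is sound, except with negligible probability the value committed in $C_p = C_x g^{x'+\delta_x}$ equals $x+x'+\delta_x$ for a single $x$ that is fixed \emph{independently} of $(x',y')$, and similarly $q=y+y'+\delta_y$. The EA's range check forces $\delta_x,\delta_y\in[0,\Delta)$. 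Thus, conditioned on the device's coins, $p$ and $q$ are deterministic functions of the uniform pair $(x',y')$, and the randomness the device cannot touch is the $k$-bit value $x'$ (resp. $y'$).

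The core is a counting bound. For targets $p^\ast,q^\ast$, let $N(p^\ast,q^\ast)$ be the number of pairs $(x',y')$ for which the protocol yields $p=p^\ast$ and $q=q^\ast$. Because $\delta_x\in[0,\Delta)$, the equation $x+x'+\delta_x=p^\ast$ confines $x'$ to an interval of length $\Delta$, so at most $\Delta$ values of $x'$ are consistent with $p=p^\ast$; symmetrically at most $\Delta$ values of $y'$ are consistent with $q=q^\ast$. Hence $N(p^\ast,q^\ast)\le\Delta^2$. Ignoring $q^\ast$ and summing over the $\le 2^k$ choices of $y'$ gives $\Pr[p=p^\ast]\le \Delta\cdot 2^k/2^{2k}=\Delta/2^k$, so the unconditional min-entropy of $p$ is already at least $k-\log_2\Delta$.

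To handle conditioning on the other prime I would use the average conditional min-entropy $\tilde H_\infty(p\mid q)$. Writing $\Pr[q=q^\ast]$ as (the number of $(x',y')$ with $q=q^\ast$) divided by $2^{2k}$, the quantity to bound telescopes:
\[
\sum_{q^\ast}\Pr[q=q^\ast]\,\max_{p^\ast}\Pr[p=p^\ast\mid q=q^\ast]
  = \frac{1}{2^{2k}}\sum_{q^\ast}\max_{p^\ast}N(p^\ast,q^\ast).
\]
By hypothesis each prime lies in $[2^{k+1},2^{k+2})$, so $q^\ast$ ranges over at most $2^{k+1}$ values; combined with $N(p^\ast,q^\ast)\le\Delta^2$ this gives a bound of $2\Delta^2/2^k$, hence $\tilde H_\infty(p\mid q)\ge k-1-2\log_2\Delta$, and the symmetric argument bounds $\tilde H_\infty(q\mid p)$. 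Since Section~\ref{sec:RSAgen} shows $\Delta=O(k)$, we have $2\log_2\Delta\le d\log k$ for an absolute constant $d$, which yields the claimed $k-2-d\log k$ bound.

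The main obstacle to watch is precisely this conditioning step. A malicious device can arrange that a rare value $q^\ast$ is produced by only a single pair $(x',y')$, which pins down $p$ and drives the \emph{worst-case} conditional min-entropy to zero; so the theorem cannot hold in the worst-case sense, and the averaging is what rescues it, since weighting each $q^\ast$ by $\Pr[q=q^\ast]$ exactly cancels the small denominator that causes trouble. A secondary, more routine point is the dependence on commitment binding and proof soundness to fix $x$ and $y$ as constants independent of $(x',y')$; these are what introduce the negligible error term in the statement.
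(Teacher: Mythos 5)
There is a genuine gap, and it sits exactly at your core counting bound. The soundness of $\pi$ together with the binding of $C_x,C_y$ certifies only the congruence $n \equiv (x+x'+\delta_x)(y+y'+\delta_y) \pmod{Q}$ for some $x,y \in \mathbb{Z}_Q$ fixed before $x',y'$ are revealed; it does \emph{not} certify that the prime factor $p$ of $n$ equals $x+x'+\delta_x$ over the integers. The protocol contains no range proof on the committed $x$ and $y$, so a malicious device may commit to arbitrary elements of $\mathbb{Z}_Q$ (with $Q$ far larger than $2^{2k}$) and exploit modular wraparound: the paper's own example is $x \in [2^k,2^{k+1})$ and $y \in [Q/2+2^k,\,Q/2+2^{k+1}]$, which yields a modulus in the correct integer range with probability $1/2$ while the actual integer factors of $n$ are \emph{not} translates of $x'$ and $y'$ by offsets in $[0,\Delta)$. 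Consequently your step ``the equation $x+x'+\delta_x=p^\ast$ confines $x'$ to an interval of length $\Delta$'' is unjustified, and $N(p^\ast,q^\ast)\le\Delta^2$ is unproven. What actually has to be bounded is the number of pairs $(x',y') \in [2^k,2^{k+1})^2$ satisfying $n \equiv (x+x')(y+y') \pmod{Q}$ for \emph{arbitrary} fixed $x,y \in \mathbb{Z}_Q$; that is precisely Lemma~\ref{lemma:numth}, whose proof occupies the appendix and needs Minkowski's theorem plus Coppersmith's method to get a $k^d$ (not $\Delta$-sized) solution bound. The paper explicitly remarks that if the device additionally gave zero-knowledge range proofs that $x,y \in [2^k,2^{k+1})$, the lemma---and in essence your argument---would be trivial; the protocol omits range proofs for efficiency, and the entire difficulty of the theorem lives in that omission. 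Relatedly, the $d\log k$ slack in the paper's bound comes from the $k^d$ solution count, the conditioning on $n$ being an RSA modulus (density about $1/k^2$), and the $\Delta^2$ adaptivity factor---not from $\log\Delta$ alone.

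Your peripheral moves are sound and do mirror the paper: fixing the device's coins is a valid reduction for a min-entropy lower bound; handling the adaptive choice of $\delta_x,\delta_y$ by a multiplicative $\Delta^2$ factor is exactly what the full proof of Theorem~\ref{thm:malice} does; and your observation that worst-case conditional min-entropy can be driven to zero by a device that makes some $q^\ast$ reachable from a single pair $(x',y')$---so the ``conditioned on the other prime'' clause must be read in an average sense---is a genuine subtlety that the proceedings-style proof sketch glosses over (the full version instead states the guarantee as min-entropy $2k-c\log k$ of $n$ itself and derives the per-prime statement informally). But none of this repairs the central gap: as written, your proof establishes the theorem only against devices that keep $x$ and $y$ in the prescribed short interval, which is a strictly weaker adversary than the malicious device the theorem is about.
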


{\sc Proof sketch.}
We first show that $n$ must be sampled from a distribution with sufficiently
high min-entropy.
In Step~4 of the protocol, the proof $\pi$ convinces the EA that
\[   n = (x+x'+\delta_x)(y+y'+\delta_y) \pmod{Q}  \]
for some (unknown) $x$ and $y$,
where $Q$ is the group order used for the Pedersen commitments.  Recall
that $Q>2^{100} 2^{2k}$.  Since the device must commit to $x$ before
seeing $x'$ we know that $x+x'$ is sampled independently from a
distribution over $\mathbb{Z}_Q$ with min-entropy at least $k$ (in the
worst case, $x+x'$ is sampled uniformly from the integers in an interval 
of width $2^k$).  Since
the device controls $\delta_x$ and $0 \leq \delta_x < \Delta < c\cdot k$ for
some absolute constant $c$, it follows that the min-entropy of $p_0
\deq x+x'+\delta_x$ is at least $k-\log c k$.  Similarly the
min-entropy of $q_0 \deq y+y'+\delta_y$ conditioned on $p_0$ is at
least $k-\log c k$.  Consequently, since $n$ is a product
of two primes, it can be shown for the
distributions in question here that the min-entropy of $n = p_0 q_0
\bmod Q$ is at least $2k-d \log k$ for some constant $d$.

%%% The statement above will follow from a number theoretic fact
%%% saying that for all x,y such that (x+x'+delta_x)(y+y'+delta_y) falls in 
%%% the right range with non-negligible probability, and for all RSA moduli n,
%%% the # of small points (x',y') such that  (x+x')(y+y') = n \bmod Q
%%% is small.

If $n$ is the product of two primes $n=pq$ each in
the range $[2^{k+1},2^{k+2})$ then each prime must be chosen 
from a distribution with min-entropy at least $k-2-d \log k$
\ (otherwise $n$ cannot have min-entropy at least $2k-d \log k$).
The theorem now follows.
\qed

%%%%%%%   end camera ready text   %%%%%
}{
Suppose the device is {\em dishonest} and its goal is to discredit the
entropy authority.  
The device may try to cause the EA to sign a
modulus $n$ in Step~\ref{rsa:easign} of the protocol, such that $n$ is
sampled from a low-entropy distribution.  For example, one of the
prime factors of $n=pq$ may look non-random (e.g. its binary
representation may end in many 1's), or $n$ may have a non-trivial GCD
with another public RSA modulus, or the two prime factors~$p$ and~$q$
may not be sampled independently, say $q = p+2$.  If the factors $p$
and $q$ become public then the EA's signature would serve as
incriminating evidence that the ``random'' values $x'$ and $y'$ the EA
contributed to the protocol in Step~\ref{rsa:ea} were not sampled
independently from the uniform distribution over $[2^k, 2^{k+1})$.

If the modulus $n$ output by the device is an ill-formed RSA modulus,
say $n$ is not a product of two primes or the primes are not in
$[2^{k+1}, 2^{k+2})$ then clearly the EA is not at fault since the device
did not properly generate $n$.  Therefore the EA need not worry about
invalid moduli---it should only care about not signing moduli
sampled from low-entropy distributions. 

We argue in Theorem~\ref{thm:malice} below that an honest EA will never sign a
modulus sampled from a low-entropy distribution even when interacting with a
malicious device.  
The desire to protect the EA from a malicious
device explains why we need a protocol such as the protocol of
Figure~\ref{fig:proto-rsa}.  If this property is not needed then a far
simpler protocol is sufficient: the EA can simply send a random string
to the device.  The device will generate a modulus $n$ incorporating
the entropy from the EA and send the resulting modulus back to the EA
for signing.  The problem is that in this trivial protocol the EA
blindly signs the modulus without any guarantees that the modulus incorporates
the EA's randomness.  Indeed, a dishonest device could easily get
the EA to sign a modulus $n=pq$ where the primes $p$ and $q$ are
selected from a low-entropy distribution, say where $p=q+2$.  The
dishonest device could then claim that the EA provided faulty entropy
and thereby discredit the EA.

\medskip
Theorem~\ref{thm:malice} below shows that no malicious device can
discredit the EA when the protocol in Figure~\ref{fig:proto-rsa} is
used.  In particular, we show that a modulus $n$ signed by the EA in
Step~\ref{rsa:easign} is sampled from a distribution with min-entropy
of at least $2k - \polylog(k)$ bits.  
Since we assume $n$ is well
formed, the primes $p$ and $q$ must lie in the interval $[2^{k+1}, 2^{k+2})$ 
and therefore each prime must be chosen from a distribution
with independent min-entropy of about $k$ bits (otherwise $n$ cannot
have min-entropy of around $2k$ bits).  
% The fact that the primes have
% min-entropy of about $k$ bits has two positive implications:
% \begin{itemize}
% \item First, if the min-entropy is about $k$ bits then the probability
% that an adversary can successfully guess one of the primes is about $1/2^k$
% which is negligible.
% \item Second, another device who generates an RSA
% modulus using this protocol will choose the same prime with
% probability at most about $1/2^k$.  Technically this follows from the fact
% that the collision entropy of a distribution (denoted by $H_2$) is lower
% bounded by its min entropy (denoted by $H_\infty$).
% \end{itemize}
% These positive implications explain why it suffices to lower bound
% the min-entropy of each of the primes. 

\begin{theorem}  \label{thm:malice}
Consider an honest entropy authority interacting with a malicious
polynomial-time device using the protocol in
Figure~\ref{fig:proto-rsa}.  Let~$n$ be a modulus signed by an honest
entropy authority in Step~\ref{rsa:easign}.  
If $n=pq$, where  $p$ and
$q$ are primes in $[2^{k+1},2^{k+2})$, and the prime order $Q$ of the group 
used for Pedersen
commitments satisfies $Q > 2^{4k+8}$, then $n$ is sampled from a
distribution with at least $2k - c \log(k)$ bits of min-entropy for
some absolute constant $c$.
\end{theorem}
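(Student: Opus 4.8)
The plan is to bound, for every admissible target $n^\ast=pq$ with $p,q$ prime in $[2^{k+1},2^{k+2})$, the probability that the honest EA signs exactly $n^\ast$ in Step~\ref{rsa:easign}, and to show this is at most $2^{-(2k-c\log k)}$; the theorem then follows since it holds for every such target. First I would use the computational binding of the Pedersen commitments to argue that, except with negligible probability, the values $x,y$ committed in Step~\ref{rsa:commit} are fixed before the device sees the EA's challenge $(x',y')$, and I would use the knowledge-soundness of the multiplication proof $\pi$ in the random-oracle model to extract, from any accepting transcript, integers $x,y$ and offsets $\delta_x,\delta_y\in[0,\Delta)$ with $n^\ast\equiv(x+x'+\delta_x)(y+y'+\delta_y)\pmod Q$. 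Writing $p_0\deq x+x'+\delta_x$ and $q_0\deq y+y'+\delta_y$, this reduces the theorem to a statement about the distribution of the pair $(p_0,q_0)\in\mathbb Z_Q^2$ induced by the EA's uniform choice of $x',y'\in[2^k,2^{k+1})$ and the device's adversarial, adaptive choice of everything else.

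The second step shows that the pair $(p_0,q_0)$ has min-entropy at least $2k-2\log\Delta$. Since the device commits to $x,y$ before seeing $(x',y')$ and both challenges are revealed before the device fixes $(\delta_x,\delta_y)$, the set of challenge pairs producing a fixed target $(p_0,q_0)=(a,b)$ is contained in $\{(a-x-\delta_x,\,b-y-\delta_y):\delta_x,\delta_y\in[0,\Delta)\}$, which has at most $\Delta^2$ elements; as $(x',y')$ is uniform over $2^{2k}$ pairs this gives $\Pr[(p_0,q_0)=(a,b)]\le\Delta^2\,2^{-2k}$ for every $(a,b)$. Here $\Delta=O(k)$, so $2\log\Delta=O(\log k)$; this is exactly where the smallness of $\Delta$ from Section~\ref{sec:findingprimes} is essential, since a large $\Delta$ would let the device cancel the EA's contribution and force $(p_0,q_0)$.

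The crux is transferring this joint min-entropy through the product map. We have $\Pr[n=n^\ast]=\sum_{(a,b)}\Pr[(p_0,q_0)=(a,b)]\le N\cdot\Delta^2\,2^{-2k}$, where $N$ is the number of solutions of $ab\equiv n^\ast\pmod Q$ lying in the reachable region $[x+2^k,\,x+2^{k+1}+\Delta)\times[y+2^k,\,y+2^{k+1}+\Delta)$. This region is an axis-aligned box of side below $2^{k+1}$ whose location $(x,y)$ is adversarially chosen, so the whole theorem comes down to a counting bound: the modular hyperbola $ab\equiv n^\ast\pmod Q$ meets any box of side $<2^{k+1}$ in only $N=\mathrm{poly}(k)$ points. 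This is precisely where the hypothesis $Q>2^{4k+8}$ is used, forcing the box area ($<2^{2k+2}$) to lie below $\sqrt Q$, the regime in which the hyperbola is sparse. I expect this counting bound to be the main obstacle: the completion/Kloosterman estimates are useless here (their error term swamps the box), so I would argue instead from the fact that $Q$ is prime, so $\mathbb Z_Q$ is a field, $n^\ast$ is invertible, and the level set $ab=n^\ast$ is a smooth conic meeting each line in at most two points — hence any solutions collinear over $\mathbb F_Q$ number at most two — combined with the unique integer factorization of the semiprime $n^\ast$ to rule out the remaining clustered configurations once the box area is below $\sqrt Q$.

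Combining the three steps gives $\Pr[n=n^\ast]\le N\cdot\Delta^2\,2^{-2k}\le 2^{-(2k-c\log k)}$ for a constant $c$ absorbing $\log N=O(\log k)$ and $2\log\Delta=O(\log k)$, which is the claimed min-entropy bound. Finally I would dispose of the same technical subtlety noted after Theorem~\ref{thm:rsa-device}: if the device completes the protocol only with negligible probability the bound is vacuous, and the proof-of-knowledge extraction runs in expected polynomial time against any device that completes with non-negligible probability.
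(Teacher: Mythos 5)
Your reduction up to the counting step is essentially the paper's own: soundness of $\pi$ pins down $n \equiv (x+x'+\delta_x)(y+y'+\delta_y) \pmod Q$ for values $x,y$ fixed (by binding) before $x',y'$ are revealed; uniformity of $(x',y')$ then gives $\Pr[(p_0,q_0)=(a,b)] \le \Delta^2 2^{-2k}$ with $\Delta=O(k)$; and you correctly isolate the crux as bounding the number of points of the modular hyperbola $ab \equiv n \pmod Q$ in a box of side roughly $2^{k+1}$ at an \emph{adversarial} location, which is exactly the paper's Lemma~\ref{lemma:numth}, with $Q > 2^{4k+8}$ entering just where you say it does. (One secondary omission: the paper additionally conditions on $n$ being an RSA modulus, paying a factor of about $k^2$ via the density of such moduli, so that the min-entropy bound applies to the distribution of signed moduli rather than being a pointwise bound on the unconditional distribution; your closing remark about negligible completion probability gestures at this but does not resolve it.)

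The genuine gap is your proposed proof of the counting bound itself. The observation that the conic $ab=n$ meets every line in at most two points only forbids three collinear solutions, and a set of points in an $M \times M$ box with no three collinear can have cardinality $\Theta(M)$ (this is the no-three-in-line phenomenon); so that argument can never give better than $O(2^k)$ solutions, exponentially far from the required $\mathrm{poly}(k)$. Likewise, unique factorization of the semiprime $n$ is only usable once the congruence is lifted to an equation over $\mathbb{Z}$, and no such lift is available directly: the box sits at an adversarial offset $(x,y) \in \mathbb{Z}_Q^2$, so the integer products $(x+a)(y+b)$ have size on the order of $Q^2$, not less than $Q$. (If $x,y$ were confined to $[2^k,2^{k+1})$ the lemma would indeed be trivial by integer factorization---the paper says as much---but the entire difficulty is that the device may commit to arbitrary $x,y \in \mathbb{Z}_Q$.) The paper closes this gap in Appendix~\ref{app:proof} with the Cilleruelo--Garaev technique: Minkowski's theorem on a rank-3 lattice yields a nonzero $t$ with $|t| < m^{1/2}$ and $|ta \bmod m|, |tb \bmod m| < m^{3/4}$; multiplying the congruence by $t$ then splits into two cases---either some coefficient is large, and Coppersmith's bivariate small-root algorithm enumerates all solutions, giving $\polylog(m)$ of them, or all coefficients are small, and $(tx+u)(ty+v) = t^2 n$ holds over the integers, whereupon divisor-counting of $t^2 n$ (only here does semiprimality of $n$ enter) caps the count at eight. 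Without some such lattice/Coppersmith ingredient your third step is a restatement of the lemma, not a proof of it.
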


The proof of Theorem~\ref{thm:malice} relies on a number theoretic
statement about the number of small solutions to a particular modular
equation.  We first state the lemma and then prove Theorem~\ref{thm:malice}. 

\begin{lemma} \label{lemma:numth}
For all sufficiently large $k$, 
all primes $Q > 2^{4k+8}$, 
all $x,y \in \mathbb{Z}_Q$, and all $n \in [2^{2k+2}, 2^{2k+4})$ that are a 
product of two primes: \\
\mbox{}\quad the number of $x',y' \in [2^k, 2^{k+1})$ such that
\begin{equation*} 
  n \equiv (x+x')(y+y') \pmod{Q}  
\end{equation*}
is bounded by $k^d$ for some absolute constant $d$. 
\end{lemma}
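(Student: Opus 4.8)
The plan is to translate the congruence into a count of lattice points on a short arc of a modular hyperbola and to exploit the very large gap between the box size $H := 2^k$ and the modulus, namely $Q > 2^{4k+8} > 16H^4$.

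First I would reduce to a clean geometric count. Writing $u = x+x'$ and $v = y+y'$, as $x',y'$ range over $[2^k,2^{k+1})$ the pair $(u,v)$ ranges over a box $I_u \times I_v$ with $|I_u| = |I_v| = H$; reducing modulo $Q$ splits each interval into at most two genuine integer intervals inside $[0,Q)$, so up to a factor of four I may assume $u \in [A,A+H)$ and $v \in [B,B+H)$ with these intervals lying in $[0,Q)$. Since $Q$ is prime and $0 < n < Q$, for each $u$ there is at most one $v \equiv n u^{-1} \pmod Q$, so the quantity to bound is the number of $u \in [A,A+H)$ with $n u^{-1} \bmod Q \in [B,B+H)$, i.e. the number of integer points of the hyperbola $uv \equiv n \pmod Q$ lying in the box.

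Next I would fix one solution $(u_0,v_0)$ (if there is none the count is zero) and introduce the tangent lattice $\Lambda = \{(a,b)\in\mathbb{Z}^2 : v_0 a + u_0 b \equiv 0 \pmod Q\}$, which has determinant $Q$ because $Q$ is prime and $u_0,v_0 \not\equiv 0$. By Minkowski's second theorem its successive minima satisfy $\lambda_1\lambda_2 \asymp Q$, hence $\lambda_2 \gtrsim \sqrt{Q} > 4H^2 \gg H$, far larger than the diameter of the box. Consequently any two points of a single residue class of $\Lambda$ lying in the box differ by a nonzero lattice vector shorter than $\lambda_2$, which must be a multiple of the shortest vector $(s,t)$; so each class meets the box only along a line parallel to $(s,t)$. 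Since $|s|,|t| \le \lambda_1 < Q$ we have $s,t \not\equiv 0$, so restricting $uv \equiv n$ to such a line yields a genuine quadratic in the line parameter with leading coefficient $st \not\equiv 0 \pmod Q$, giving at most two solutions per line. This reduces the whole problem to bounding the number of \emph{active} residue classes, those that meet the box and actually contain a solution.

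The main obstacle is exactly this last count: a priori as many as $\sim H^2$ classes could meet the box, and the general theory of points on modular hyperbolas in small boxes yields only a bound of shape $Q^{o(1)}$, which is far too weak to give a polynomial in $k = \Theta(\log Q)$. This is where I would expect the two remaining hypotheses to do the real work. Writing each solution on a genuine integer hyperbola $uv = n + mQ$, the strong separation $Q > 16H^4$ confines the defect vectors and forces any two solutions $(u_1,v_1),(u_2,v_2)$ to obey the exact relation $W^2 n \equiv R_u R_v \pmod Q$, where $W$ is the integer determinant of their difference vectors and $R_u,R_v$ are polynomially bounded in $H$; and the hypothesis that $n$ is a product of two primes (so $n$ has only four divisors) is what rules out the divisor-clustering configurations responsible for every case in which more than $O(1)$ points appear. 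Turning these two facts into a clean bound of $k^d$ on the number of active classes is the delicate heart of the argument, and it is the step I would expect to spend the most effort getting right.
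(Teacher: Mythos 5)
There is a genuine gap, and you have in fact located it yourself: everything up to your last paragraph only controls the multiplicity of solutions \emph{within} a single coset of the tangent lattice, and the step that actually carries the lemma---bounding the number of active cosets by a polynomial in $k$---is left as a hope rather than an argument. Worse, your per-line bound can be vacuous: when $\lambda_1 \approx \sqrt{Q} > 4H^2$, every line in the direction of the shortest vector meets the box in at most one point, so the line decomposition yields no savings and the entire burden falls on the unproven coset count. The mechanism you sketch for that count does not work as stated. Since $x,y$ are arbitrary in $\mathbb{Z}_Q$, the box in the $(u,v)$-plane sits anywhere in $[0,Q)^2$, so in the integer equation $uv = n + mQ$ the parameter $m$ can be as large as roughly $Q$; the hypothesis $Q > 16H^4$ alone does not ``confine the defect vectors,'' and the asserted relation $W^2 n \equiv R_u R_v \pmod{Q}$ with $R_u, R_v$ polynomially bounded in $H$ is neither derived nor obviously sufficient even if true.

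The paper closes exactly this gap with the Cilleruelo--Garaev multiplier device, which is the transformation your outline is missing. A three-dimensional Minkowski argument (the paper's Fact~\ref{easyfact}) produces a single nonzero integer $t$ with $|t| < Q^{1/2}$ such that $u = tx \bmod Q$ and $v = ty \bmod Q$ are both below $Q^{3/4}$ in absolute value, with $\gcd(t,u,v)=1$; multiplying the congruence by $t$ and centering the small variables turns it into $t x' y' + v x' + u y' = r_0 + zQ$ with $|z| = O(1)$---an equation whose coefficients are now small relative to the modulus, unlike the size-$Q$ coefficients $u_0, v_0$ of your hyperbola. The proof then splits into a dichotomy with no analogue in your outline: if $|t| > Q^{1/4}/2$ or $\max(|u|,|v|) > Q^{1/2}/4$, the coefficient condition $XY < W^{2/3}$ for Coppersmith's bivariate small-root algorithm is satisfied, and since that algorithm finds \emph{all} small roots in time $\polylog(Q)$ there are at most $\polylog(Q) = k^{O(1)}$ solutions per value of $z$; otherwise all of $t,u,v$ are so small that multiplying once more by $t$ gives both sides of $(tx'+u)(ty'+v) \equiv t^2 n \pmod{Q}$ absolute value below $Q/2$, forcing the exact integer identity $(tx'+u)(ty'+v) = t^2 n$, whereupon $\gcd(tx'+u,t)=1$ forces $tx'+u$ to divide $n$---this is the only place the two-prime hypothesis enters, capping the count at eight. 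So your instincts about where the hypotheses must act are correct (the size gap enables an integer lift; the divisor structure of $n$ finishes the lifted case), but without the small common multiplier that makes the lift possible, and without the Coppersmith half of the dichotomy, the proposal does not constitute a proof.
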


\noindent
We prove Lemma~\ref{lemma:numth} in Appendix~\ref{app:proof}.

In our protocol, the device has control
over the values $\delta_x$ and $\delta_y$.
These values must fall in the range $[0, \Delta)$,
where $\Delta$ has size linear in $k$, so 
the values $x'$ and $y'$ in Lemma~\ref{lemma:numth}
range over $[2^k, 2^{k+1} + wk)$ for some absolute constant $w$.
We note that if Lemma~\ref{lemma:numth} holds over the interval
$[2^k, 2^{k+1})$, it must also hold over this wider interval
because the number of {\em additional} solutions in the wider interval
cannot be larger than $2wk$.
Thus, the total number of solutions in the
wider interval is still bounded by a polynomial in $k$.

\begin{proof}[Proof of Theorem~\ref{thm:malice}]
In Step~\ref{rsa:easign} of the protocol, the proof $\pi$ convinces the EA that
\[   n = (x+x'+\delta_x)(y+y'+\delta_y) \pmod{Q}  \]
for some (unknown) $x$ and $y$, where $Q$ is the group order used for 
the Pedersen commitments.  Recall that $Q>2^{4k+8}$.  
Since the device 
must commit to $x$ before seeing $x'$ we know that $p_0 = x+x'$ is sampled 
from a distribution over $\mathbb{Z}_Q$ with min-entropy at least $k$ (in the
worst case, $x+x'$ is sampled uniformly from the integers in an interval 
of width $2^k$).  Similarly the min-entropy of $q_0 = y+y'$ is at least $k$
even when conditioned on $p_0$.  Therefore, the probability that 
$(p_0, q_0)$ is equal to a particular pair in $\mathbb{Z}_Q$ is at 
most $1/2^{2k}$.

Now, by Lemma~\ref{lemma:numth}, for all RSA moduli $n$
in the interval $[2^{2k+2}, 2^{2k+4})$, the probability 
that $p_0 q_0 = n \bmod Q$ is at most $k^d / 2^{2k}$.
Since the density of RSA moduli in this interval is 
about $1/k^2$, conditioning on $n$ being an RSA modulus increases
the probability that $p_0 q_0 = n \bmod Q$ to at most $k^{d+2} / 2^{2k}$.
Therefore the min-entropy of the random variable $p_0 q_0$
conditioned on $p_0 q_0$ being an RSA modulus is at least
$2k - (d+2) \log k$.  
Since the malicious device controls $\delta_x$ and
$\delta_y$ and $0 \leq \delta_x, \delta_y < \Delta$ the device can increase
the probability of a particular $n$ by at most a factor of $\Delta^2$.
Therefore, assuming $\Delta < wk$ for some absolute constant $w$,
the probability that a successful protocol run produces a particular
RSA modulus is at most $k^{d+4}w^2 / 2^{2k}$.  We obtain that the min-entropy of 
the random variable $(x+x'+\delta_x)(y+y'+\delta_y) \pmod{Q}$, conditioned 
on this quantity being an RSA modulus, is at least $2k - (d+4) \log k - 2 \log w$,
as required. 
\end{proof}

We note that Lemma~\ref{lemma:numth} would not be needed to prove
Theorem~\ref{thm:malice} if in the message following Step~1
(Figure~\ref{fig:proto-rsa}) the device proved to the EA in
zero-knowledge that $x$ and $y$ are in the interval $[2^k,2^{k+1})$
using a zero-knowledge range proof~\cite{B00,CCS08}.  The reason is
that if $x$ and $y$ are bound to the relatively short interval
$[2^k,2^{k+1})$ then Lemma~\ref{lemma:numth} is trivial to prove.  By
relying on Lemma~\ref{lemma:numth}, which holds for all $x$ and $y$
in $\mathbb{Z}_Q$, we avoid the need for zero-knowledge range proofs,
making our protocol considerably more efficient.

\paragraph{Smaller Values of $Q$}
While we proved Lemma~\ref{lemma:numth} (and thus
Theorem~\ref{thm:malice}) for $Q > 2^{4k+8}$, we conjecture that
Theorem~\ref{thm:malice} holds for smaller values of $Q$ and in
particular when $Q > 2^{2k+100}$ as suggested in
Section~\ref{sec:RSAgen}.  This improves efficiency of the protocol
since Pedersen commitments are more efficient with a smaller $Q$.  

The reason that security likely holds for a smaller $Q$ is that that a
considerably weaker version of Lemma~\ref{lemma:numth} is sufficient
to prove the security of our protocol.  To see why, observe that the
$x,y \in \mathbb{Z}_Q$ chosen by device must be such that for random
$x'$ and $y'$ in $[2^k,2^{k+1})$ the resulting modulus $n$ is in the
correct range with non-negligible probability.  The set of such $x$
and $y$ is quite limited and it suffices that Lemma~\ref{lemma:numth}
hold only for such $x$ and $y$ which is a considerably smaller set
than all of $\mathbb{Z}_Q$.  
Clearly, this restriction on $x$ and $y$
is satisfied if $x$ and $y$ are in the range $[2^k,2^{k+1})$.
However, a malicious device can choose $x$ and $y$ differently. 
For example the device can choose $x$ in $[2^k,2^{k+1})$ and $y$ in
$[Q/2+2^k, Q/2+2^{k+1}]$.  In this case the resulting $n$ will be in
the correct range with probability $1/2$, namely whenever
$x+x'+\delta_x$ is even.  Since the device is restricted to choosing
such special $x,y \in \mathbb{Z}_Q$ (i.e. $x$ and $y$ such that for
random $x'$ and $y'$ in $[2^k,2^{k+1})$ the resulting modulus $n$ is
in $[2^{2k+2}, 2^{2k+4})$ with non-negligible probability), we only
need Lemma~\ref{lemma:numth} to hold for such $x$ and~$y$.

\paragraph{Optimization for Large $Q$}
An implementation that needs to use the provable bound on $Q$ from
Theorem~\ref{thm:malice} will need to use a group $G$ for the Pedersen
commitments and product proofs which has order $Q \approx 2^{4k+8}$,
compared with the $Q \approx 2^{2k+100}$ used in
Section~\ref{sec:RSAgen}.  To reduce the overhead of using this larger
group size, an implementation could use a group $G$ whose order $Q$ is a
product of two large primes each of size about $2^{2k+4}$.  This group
$G$ then is a direct product of two smaller groups $G_1$ and $G_2$
each of size about $2^{2k+4}$.   In the appendix, we prove
that Lemma~\ref{lemma:numth} holds for such composite $Q$ and therefore,
Theorem~\ref{thm:malice} continues to hold for such groups $G$.

When $G$ is a direct product of $G_1$ and $G_2$ an implementation can
uniquely describe an element $x \in G$ as a tuple $\langle x_1 \in
G_1, x_2 \in G_2 \rangle$.  The implementation can then perform the
commitments and product proofs over these reduced elements (once in
$G_1$ and once in $G_2$) in groups of order $m_1 \approx m_2 \approx
2^{2k+4}$.  By representing group elements in this way, the
implementation benefits from the provable bounds in Theorem~\ref{thm:malice},
but group exponentiations in this larger group take only {\em twice} as long as group
operations take in the smaller group we use in Section~\ref{sec:RSAgen}.
In contrast, the na\"{\i}ve method of using a group of {\em prime} order $Q \approx 2^{4k+8}$ would cause
group exponentiations to take {\em four times} as long in the larger
group than the same operations take in the smaller group.
Since a number of the protocol operations (e.g., finding the $\delta$ values) take time independent
of the size of the group used for commitments, 
we expect that performing commitments in this larger group would only 
cause a $1.5\times$ overall slowdown.

     %%%%  text for full version
}

\subsection{DSA Protocol}
\label{sec:sec:dsa}

In this section, we prove that the DSA key
generation protocol satisfies the 
security properties outlined in Section~\ref{sec:model}.

\subsubsection{Protects Device from a Malicious EA}
We first prove that a device with a strong entropy source leaks
no information about its secret key to the entropy
authority during a run of the protocol.

\begin{theorem}
If the device has a strong entropy source
(i.e., the device can sample repeatedly from the uniform
distribution over $\mathbb{Z}_Q$),
then the entropy authority can simulate its interaction
with the device.
\end{theorem}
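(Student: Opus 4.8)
The plan is to exhibit a simulator $S$ that reproduces the entropy authority's view of the protocol without ever using the device's secret $x$, and to show that the simulated view is distributed identically to a real one. Treating the EA as an arbitrary (possibly malicious) polynomial-time algorithm $\adv$, its view in a run consists of the incoming commitment $C_x$ together with the device's response $\langle A, \pi\rangle$; the quantities $x'$ and $\sigma$ are produced by the EA itself, so it suffices to simulate $C_x$ and $\langle A,\pi\rangle$ faithfully in reply to the EA's messages. The three facts I would lean on are: (i) Pedersen commitments are perfectly hiding, so a commitment formed with uniform randomness is a uniform element of $G$ that is statistically independent of the committed value; (ii) if $x$ is uniform in $\mathbb{Z}_Q$ then $a = x+x' \bmod Q$ is uniform in $\mathbb{Z}_Q$ for every fixed $x'$; and (iii) the $\textsf{PedProve}$ proof is (honest-verifier) zero-knowledge, so its transcript can be produced by the associated simulator while programming the random oracle.

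Concretely, $S$ first samples $C_x \xleftarrow{R} G$ and sends it to the EA, then receives the reply $x' \gets \adv(C_x)$. It next samples $a \xleftarrow{R} \mathbb{Z}_Q$, sets $A \gets g^a$, and invokes the zero-knowledge simulator for $\textsf{PedProve}$ to produce a proof $\pi$ for the statement $\textsf{PoK}\{x,r : C_x = g^x h^r \land A/g^{x'} = g^x\}$, programming the random oracle as needed. Finally $S$ sends $\langle A, \pi\rangle$, receives $\sigma$ from the EA, and outputs the transcript $\langle C_x, x', A, \pi, \sigma\rangle$.

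I would then argue that this transcript is distributed identically to a real one. The marginal distribution of $C_x$ matches because perfect hiding makes the real commitment $g^x h^r$ (with $r$ uniform) a uniform element of $G$, so the EA's choice $x' \gets \adv(C_x)$ has the right distribution in both worlds. The crucial step is the joint distribution of $(C_x, A)$: in the real protocol both depend on the same secret $x$, so one might fear they are correlated in a way the simulator fails to reproduce. The resolution is that, by perfect hiding, conditioning on any fixed $C_x$ leaves the committed value $x$ uniform over $\mathbb{Z}_Q$ (each candidate $x$ is consistent with $C_x$ under a unique opening $r$). Hence, conditioned on $(C_x, x')$, the real $A = g^{x+x'}$ is uniform over $G$ --- exactly the distribution $S$ produces by sampling $a$ uniformly. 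The proof $\pi$ matches by the zero-knowledge property, and I would note that the proved statement is in fact always satisfiable (for any $C_x, A, x'$ there is a consistent opening $x = \log_g(A/g^{x'})$ with a unique $r$), so the honest-verifier simulator applies directly and yields a perfectly distributed $\pi$.

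I expect the main obstacle to be exactly this joint-distribution argument --- making precise that the apparent correlation between $C_x$ and $A$ is washed out by perfect hiding --- rather than any quantitative estimate. Unlike the RSA simulator of Theorem~\ref{thm:rsa-device}, this simulator needs no advice bits and performs a single forward pass: because $a$ ranges uniformly over all of $\mathbb{Z}_Q$, there is no range condition on $A$ to satisfy, so no rewinding and no ``repeat until'' loop is required, which is precisely why the simulation is \emph{perfect}. The only remaining care is the standard point that a malicious EA may abort or return an invalid $\sigma$ in the last step; since $S$ simply forwards whatever $\sigma$ it receives and never rewinds, this affects neither the running time nor the perfection of the simulated view.
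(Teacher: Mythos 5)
Your proof is correct, and its core machinery is the same as the paper's: sample $C_x$ uniformly from $G$ (justified by perfect hiding), let the adversary supply $x'$ and $\sigma$, and produce $\pi$ with the NIZK simulator in the random-oracle model. The genuine difference is in what gets simulated. The paper's simulator $S(A)$ takes the device's \emph{actual} public key as input and outputs a transcript of a \emph{successful} run for that $A$; consequently, when the malicious EA returns an invalid $\sigma$, the paper's simulator must rewind and retry, which is why the paper (via the discussion at the end of Theorem~\ref{thm:rsa-device}) needs the caveat that an EA completing the protocol with only negligible probability would force super-polynomial rewinding. Your simulator instead samples $a \xleftarrow{R} \mathbb{Z}_Q$ itself and does a single forward pass, perfectly reproducing the \emph{unconditioned} joint distribution of (view, deployed key) --- your conditional argument that perfect hiding leaves $x$ uniform given $C_x$, hence $A$ uniform over $G$ given $(C_x, x')$, is exactly the step that licenses this and is in fact spelled out more carefully than in the paper, which asserts indistinguishability componentwise. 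Note, though, a small misattribution: the rewinding in the paper's DSA simulator is not an analogue of the RSA simulator's range-condition loop (which your uniform-$a$ observation correctly eliminates); it is conditioning on protocol success, which your forward-pass simulator deliberately does not do. To recover the paper's exact formulation --- simulation given the real $A$, conditioned on success --- your construction adapts immediately (fix $A$ to the input rather than sampling it, which is sound precisely because you showed $A$ is uniform and independent of $(C_x, x')$, and rewind on invalid $\sigma$), but that adaptation reintroduces the expected-polynomial-time caveat, so your closing claim that aborts ``affect neither the running time nor the perfection'' holds only for your unconditioned variant of the statement.
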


\begin{proof}
We construct a simulator $S$ 
that, given a DSA public key $A = g^a$, outputs
the transcript 
$\langle C_x, x', A, \pi, \sigma \rangle$
of a protocol run between an honest device
and a malicious entropy authority 
$\langle \mathcal{A}_1, \mathcal{A}_2 \rangle$.
The simulator $S$ constructs the transcript
as follows:

\begin{minipage}{\columnwidth}
\begin{tabbing}
000 \= 000 \= \kill
$S(A)$:
\>\> set $C_x \xleftarrow{R} G$\\
\>\> set $x' \gets \mathcal{A}_1(C_x)$\\ 
\>\> generate $\pi$ using the NIZK simulator\\
\>\> set $\sigma \gets \mathcal{A}_2(A, \pi)$\\ 
\>\> if $\sigma$ is invalid, rewind the adversary and repeat;\\
\>\> otherwise, output the simulated transcript: \\
\>\> $\qquad\langle C_x, x', A, \pi, \sigma \rangle$
\end{tabbing}
\end{minipage}\\

This simulated transcript is indistinguishable from the
transcript an EA would generate during an interaction with an
honest device with a strong entropy source.
The value $C_x$ will be a random element from $G$ in both cases,
the value $x'$ will be chosen by the adversary in both cases,
the NIZK will be simulable in the random oracle model~\cite{bellare93random}
by the zero-knowledge property of the NIZK, and the signature is
constructed identically in either case.

As explained at the end of the proof of Theorem~\ref{thm:rsa-device}, 
we can assume that the entropy authority allows an honest execution of
the protocol to succeed with
non-negligible probability and thus, the simulator runs in 
time polynomial in the security parameter.
\end{proof}

\subsubsection{Protects Device from the CA and Client}
Having established that a device with a strong entropy source leaks
no secret information to the entropy authority, we
now demonstrate that the secret key produced by the protocol
is sampled from the uniform distribution over
the set of possible keys, even if the entropy authority is dishonest.

\begin{theorem}
The secret key produced by a successful run of the protocol in
Figure~\ref{fig:proto-dsa} between an honest device (with a strong
entropy source) and a malicious entropy authority, will be sampled independently 
from the uniform distribution over~$\mathbb{Z}_Q$.
\end{theorem}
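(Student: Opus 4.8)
The plan is to prove the statement by a one-time-pad argument over the group $\mathbb{Z}_Q$: I will show that the device's secret key $a = x + x' \bmod Q$ is uniform on $\mathbb{Z}_Q$ no matter how the malicious entropy authority chooses $x'$. Two facts drive the argument. First, an honest device with a strong entropy source draws both $x$ and the commitment randomness $r$ uniformly and independently from $\mathbb{Z}_Q$. Second, the Pedersen commitment is perfectly hiding, so the only message the EA sees before it must commit to $x'$ carries no information about $x$.

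First I would establish that $x$ is independent of the EA's choice. Because $r$ is uniform on $\mathbb{Z}_Q$ and $h$ is a generator of $G$, the element $h^r$ is uniform on $G$; hence $C_x = g^x h^r$ is uniform on $G$ for every fixed $x$, and so $C_x$ is statistically independent of $x$. The malicious EA produces $x'$ as a (possibly randomized) function of its view, namely $C_x$ together with its own private coins. Since the EA's coins are independent of the device's $(x,r)$ and $C_x$ is independent of $x$, the pair consisting of the view and the coins is independent of $x$, and therefore so is $x'$. Now I invoke the one-time pad: for each fixed value of $x'$ the map $x \mapsto x + x' \bmod Q$ is a bijection of $\mathbb{Z}_Q$, so $a = x + x' \bmod Q$ is uniform on $\mathbb{Z}_Q$, and averaging over the distribution of $x'$ preserves uniformity. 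Thus $a$ is uniform on $\mathbb{Z}_Q$ and its distribution does not depend on the EA's strategy, which is the sense in which it is sampled ``independently.'' Since the public key $A = g^a$ is a deterministic image of $a$, nothing in Steps~\ref{dsa:pub}--\ref{dsa:eacheck} changes the distribution of the key at the moment it is generated.

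The step I expect to require the most care is the conditioning implicit in the phrase ``a successful run.'' The device outputs $\langle A, \sigma\rangle$ only when the EA returns a valid signature in Step~\ref{dsa:eacheck}, whereas $a$ is already fixed in Step~\ref{dsa:pub}, before that final message; so the distribution of the key \emph{at the point of generation} is exactly the uniform distribution established above. A fully adaptive malicious EA could, in principle, decide whether to sign as a function of $A = g^a$ -- for example, signing only when $A$ lies in some fixed set $T \subseteq G$ -- and thereby bias the distribution of keys \emph{conditioned} on the signature being valid. I would therefore state the conclusion in terms of the distribution of $a$ as produced in Step~\ref{dsa:pub}, which is uniform and independent of the EA's choice of $x'$, rather than conditioned on the EA's subsequent signing decision; and I would note, as in the closing remark of the proof of Theorem~\ref{thm:rsa-device}, that we may in any case restrict attention to entropy authorities that allow an honest execution to complete with non-negligible probability, since a device obtains a key from such an authority essentially always and from any other essentially never.
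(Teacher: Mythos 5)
Your proof is correct and follows essentially the same route as the paper's: perfect hiding of the Pedersen commitment (uniform $r$ making $C_x$ statistically independent of $x$) forces the EA's $x'$ to be independent of $x$, and the one-time-pad bijection $x \mapsto x + x' \bmod Q$ then yields uniformity of $a$. Your added caveat about conditioning on a ``successful run'' is a genuine refinement that the paper's proof silently skips---a malicious EA that signs selectively as a function of $A = g^a$ can indeed bias the distribution of keys conditioned on obtaining a signature---so stating the conclusion for the key as fixed in Step~\ref{dsa:pub}, together with the restriction (borrowed from the end of the proof of Theorem~\ref{thm:rsa-device}) to authorities that allow completion with non-negligible probability, is the right way to make the theorem's statement precise.
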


\begin{proof}
Given that the device is honest and has a strong entropy source, the device
will sample the commitment randomness $r$ from the 
uniform distribution over $\mathbb{Z}_Q$. 
Thus, the commitment $C_x$ will be independent of the device's secret value $x$.
The entropy authority must send its random value $x'$ to the device given only 
this commitment $C_x$, so no matter how the entropy authority selects $x'$, it
must be independent of~$x$.
The honest device forms its secret key as $a = x+x' \bmod Q$, and since $x$ is
independent of $x'$, the secret key $a$ will be sampled from the uniform
distribution over $\mathbb{Z}_Q$.
\end{proof}

\subsubsection{Protects EA from a Malicious Device}
Finally, we show that an honest entropy authority with a strong entropy source
will only sign public keys whose corresponding
private keys are sampled from the uniform distribution over~$\mathbb{Z}_Q$. 

\begin{theorem}
The secret key produced by a successful run of the
protocol in Figure~\ref{fig:proto-dsa} between an honest entropy authority
(with a strong entropy source) and a malicious device, will be sampled
independently from the uniform distribution over~$\mathbb{Z}_Q$.
\end{theorem}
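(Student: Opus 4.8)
The plan is to show that the malicious device is bound to a single secret value at the moment it sends its commitment, so that the honest authority's uniform contribution $x'$ fully randomizes the resulting key. Let $\adv$ be the malicious polynomial-time device. In Step~1 it sends a commitment $C_x$, and only afterward, in Step~2, does the honest authority draw $x' \xleftarrow{R} \mathbb{Z}_Q$ and send it; because the authority has a strong entropy source, $x'$ is uniform over $\mathbb{Z}_Q$ and independent of $C_x$ and of the device's entire view through the end of Step~1. The authority signs in Step~4 only after $\textsf{PedVer}(\pi, C_x, x', A)=1$, and a valid proof $\pi$ certifies that $A = g^{x+x'}$ for some $x$ opening $C_x$. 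Hence the accepted secret exponent is $a = x + x' \bmod Q$, and it suffices to argue that the committed $x$ cannot depend on $x'$.

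First I would pin down the committed value using the proof-of-knowledge property of $\textsf{PedProve}/\textsf{PedVer}$: from an accepting Step-3 message one runs the knowledge extractor to recover a witness $(x, r)$ with $C_x = g^x h^r$ and $A/g^{x'} = g^x$. The goal is to show that, except with negligible probability, this $x$ is a fixed function of the device's state at the end of Step~1 and does not vary with the authority's later challenge.

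The key step, and the main obstacle, is to rule out a device that opens $C_x$ to different messages for different challenges. Suppose toward a reduction that with non-negligible probability the device produces, for two distinct challenges $x'_1 \neq x'_2$ (reached by rewinding it to just after Step~1 and resampling the authority's randomness), accepting transcripts whose extracted witnesses are $(x_1, r_1)$ and $(x_2, r_2)$ with $x_1 \neq x_2$. Both satisfy $C_x = g^{x_i} h^{r_i}$, forcing $r_1 \neq r_2$, and from the two openings one computes $\log_g h = (x_1 - x_2)(r_2 - r_1)^{-1} \bmod Q$, contradicting the binding assumption that no efficient party knows $\log_g h$. I would carry out this rewinding together with the extractor's own rewinding in the standard way, folding both error terms into the negligible slack exactly as the expected-polynomial-time discussion at the end of the proof of Theorem~\ref{thm:rsa-device} handles adversaries that complete the protocol with non-negligible probability.

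With $x$ thus fixed before the authority reveals $x'$, the conclusion is immediate. For each protocol run the honest authority draws a fresh uniform $x'$ independent of $C_x$, so $a = x + x' \bmod Q$ is uniform over $\mathbb{Z}_Q$; and because this $x'$ is freshly and independently sampled in every run, the secret keys the authority endorses are also mutually independent. Since the authority signs $A = g^a$ only when the proof verifies, every key it endorses has its secret exponent distributed in exactly this way, which is the claim.
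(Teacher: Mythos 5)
Your proof is correct and takes essentially the same route as the paper's: the device must commit to $x$ before seeing $x'$, binding of the Pedersen commitment forces the extracted $x$ to be independent of $x'$, the verified proof $\pi$ guarantees $A = g^{x+x'}$, and uniformity of the honest authority's $x'$ then makes $a = x + x' \bmod Q$ uniform over $\mathbb{Z}_Q$. The only difference is one of detail: where the paper invokes the binding property in a single sentence, you spell out the knowledge-extractor and rewinding reduction that turns two distinct openings of $C_x$ into a computation of $\log_g h$.
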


\begin{proof}
Since the device must commit to~$x$ before it sees~$x'$, $x$ must be
independent of~$x'$.  If the device could pick $x$ to depend on the authority's
value $x'$, the device would violate the binding property of the
commitment scheme.

The honest entropy authority will only sign the public key $A$ if the non-interactive
zero-knowledge proof $\pi$ the device sends in Step~\ref{dsa:pub} is valid and
the entropy authority will only accept the proof $\pi$ if $A = g^{x+x'}$.  
The hypothesis of the theorem is that the entropy authority is honest and has
a strong source of entropy, so the entropy authority will sample 
$x'$ uniformly from $\mathbb{Z}_Q$.
Since $x'$ is independent of $x$, the secret key $a=x+x' \bmod Q$ is uniform over
$\mathbb{Z}_Q$.
\end{proof}

\section{Evaluation}
\label{sec:eval}
\begin{table*}
\centering
\begin{tabular}{r | c c c c | c c c c}
& \multicolumn{4}{l | }{{\bf EC-DSA} (224-bit prime)} & \multicolumn{4}{l}{{\bf RSA} (2048-bit)} \\
    \hline
      & No proto & Proto & Proto+Net & {\em Slowdown} & No proto & Proto & Proto+Net & {\em Slowdown} \\
    \hline
Linksys Router  & 0.35 & 0.98 & 1.54    & $4.4\times$   & 50.54 & 93.78& 104.47& $2.1\times$\\
Laptop          & 0.014 & 0.085 & 0.646 & $48\times$  & 0.41 & 1.18 & 2.05 & $5.0\times$\\
Workstation     & 0.003 & 0.052 & 0.638 & $200\times$ & 0.15 & 0.65 & 1.22 & $8.3\times$\\
\end{tabular}
\caption{Time (in seconds) to generate a keypair
without our protocol, with a local EA,
and with an EA via the Internet with $\approx100$ ms of round-trip latency.
The Slowdown column indicates the rounded slowdown
factor of our protocol running over the Internet 
relative to the standard key generation algorithm.}
\label{tab:timing}
\end{table*}

To demonstrate the practicality of our RSA and DSA
key generation protocols,
we implemented the protocols in C using
the OpenSSL cryptography library.
We evaluated the performance of the
protocols on three different devices: 
a Linux workstation with two 3.2 GHz Intel W3656 processors,
a MacBook Pro laptop with a single 2.5 GHz dual-core processor, 
and a Linksys E2500-NP home router with a 300 MHz
Broadcom BCM5357r2 processor.
The entropy authority in all experiments 
was a modern Linux server and 
the DSA protocol experiments use the
NIST P-224 elliptic curve as the elliptic curve
DSA (EC-DSA) group~\cite{gallagher09fips}.
The source code of our implementation is available
online at \url{http://github.com/henrycg/earand}.

Embedded devices, like the Linksys router we used
in our evaluations, lack the keyboard, mouse, hard drive,
and other peripherals used as entropy sources on full-fledged machines.
As a result, these device are particularly susceptible
to generating weak keys.
By evaluating our key generation protocols on 
a \$70 Linksys router, we demonstrate
that the protocols are practical
even on low-power, low-cost (and often low-entropy) embedded devices.
For the purposes of evaluation, we installed the 
Linux-based dd-wrt~\cite{dd-wrt} operating system
on the Linksys router and ran our key generation protocol in a
user-space Linux process. 

Table~\ref{tab:timing} presents the
wall-clock time required to generate
a 2048-bit RSA key and a 224-bit EC-DSA key
on each machine, averaged over eight trial runs.
When running on the laptop and workstation, 
which have relatively fast CPUs, 
the bulk of the protocol overhead 
comes from the network latency in communicating with the
entropy authority.
On the CPU-limited home router,
the RSA protocol causes a near-$2\times$ slowdown.
Running the EC-DSA protocol 
takes fewer than two seconds on all three of the
devices.

The standard RSA keypair generation algorithm
requires much more computation than the EC-DSA
algorithm, so the cost of interacting with 
the entropy authority is amortized over
a longer total computation in the RSA protocol.
As a result, the slowdown factors on each of
the three devices is smaller for the RSA protocol
than for the DSA variant.
The protocol incurs a $2.1\times$
slowdown when running on the home router---%
generating a standard 2048-bit 
RSA keypair takes roughly 50 seconds and
generating a keypair with the protocol takes
just over 104 seconds.
On the laptop and workstation, around 50\% of the
slowdown is due to network latency.
On these faster devices, generating an
RSA keypair using the protocol takes less 
than three seconds.

% Ugly hack to put math symbols in bar char
\let\OrgDelta\delta
\protected\def\delta{%
\ifincsname
  \string\delta%
\else
  \OrgDelta
\fi
}

\let\OrgPound\#
\protected\def\#{%
\ifincsname
  \string\#%
\else
  \OrgPound
\fi
}

\begin{figure}
\begin{tikzpicture}
\begin{axis}[xlabel={CPU user time (seconds)},
xbar,
xmin=0,
enlarge y limits=0.2,
width=0.32\textwidth,
height=0.22\textwidth,
ytick=data,
bar width=7pt,
symbolic y coords={%
{Generate PKCS\#10 req.},
{Generate $C_x,C_y$},
{Multiplication NIZK},
{Generate $C_p, C_q$},
{Find $\delta_x,\delta_y$}
}]

\addplot [pattern=crosshatch] coordinates
{(1.4,{Generate PKCS\#10 req.})
(12.2,{Generate $C_x,C_y$})
(16.35,{Multiplication NIZK})
(20.1,{Generate $C_p, C_q$})
(39.5,{Find $\delta_x,\delta_y$})
};
\end{axis}
\end{tikzpicture}
\caption{Operations taking longer than 0.05s
during a run of the RSA protocol on the
home router.}
\label{fig:breakdown}
\end{figure}

Figure~\ref{fig:breakdown} presents 
a graphical break-down of the CPU
user time required to perform
the most expensive operations in the RSA key generation
protocol on the home router.
Nearly half of the CPU time consumed during
the protocol is spent in finding the
$\delta_x$ and $\delta_y$ offset values
to make the RSA factors $p$ and $q$ prime.
Finding these offsets requires running 
the Miller-Rabin~\cite{rabin80probabilistic}
primality test on a number of candidate primes.
This expensive search for primes $p$ and $q$ 
is also required to generate an RSA modulus
{\em without} our key generation protocol,
so this search does not constitute protocol overhead.

The other expensive operations are computing
the Pedersen commitments (each of which requires
big-integer modular exponentiations) and 
generating the non-interactive zero-knowledge 
proof that $n$ is the product of the values
contained in the commitments $C_p$ and $C_q$.
The final expensive operation is 
generating the PKCS\#10 certificate 
request, which the device signs with 
its newly generated RSA key. 

Our implementation does not 
use fast multi-exponentiation 
algorithms~\cite{moller01algorithms}
(e.g., for computing Pedersen commitments $g^a h^r$ quickly)
or exploit parallelism to increase
performance on multi-core machines.
An aggressively optimized production-ready 
implementation could use these
techniques to improve the performance of the protocol.

As shown in Figure~\ref{fig:keysize}, 
our protocol imposes a near-uniform $6\times$
computational overhead (measured in CPU user time) 
on EC-DSA key generation.
This slowdown arises because our EC-DSA
protocol requires five elliptic curve point multiplications
and a single signature verification, compared with the
single elliptic curve point multiplication
required in traditional EC-DSA key generation.
At the smallest usable EC-DSA key size, 112 bits, the protocol set-up cost 
dominates the overall running time, so the protocol imposes 
a near-$9\times$ overhead.

The computational overhead of generating RSA keys using our protocol
decreases as the key size increases.
The dominant {\em additional} cost of our RSA protocol is the cost
of the modular exponentiations used in the commitment scheme
and zero-knowledge proof generation.
As $k$ increases, the cost of finding the RSA primes
grows faster than the additional cost of our protocol,
so the computational overhead of our protocol tends to 1.

\begin{figure}
\centering
%
% RSA 
%
\begin{tikzpicture}[domain=0:3]
\begin{axis}[scale only axis,
legend pos=north east,
width=0.39\textwidth,
ymin=0,
ymax=14,
height=1.4in,
xtick={512, 1024, 1536, 2048, 3072, 4096},
scatter/classes={
      0={mark=square*,blue, scale=1.5},%
      1={mark=triangle*,OliveGreen, scale=2},%
      2={mark=o,draw=orange, scale=2}},
    ytick={0, 2, 4, 6, 8, 10, 12, 14},
    yticklabels={$0$, $2\times$, $4\times$, $6\times$, $8\times$, $10\times$, $12\times$},
    xlabel={RSA key size (bits)},
    ylabel style={align=center},
    axis y line*=left,
    axis x line*=bottom,
    ylabel={Computational overhead}]
\addplot[scatter, scatter src=\thisrow{class},
        ultra thick,
        color=OliveGreen,
        error bars/.cd, 
        y dir=both, 
        x dir=none, 
        y explicit, 
        x explicit, 
        error bar style={color=OliveGreen, thick}]
              table[x=x,y=y,x error=xerr,y error=yerr] {
% 12.72670807 6.255813953 3.405185409 2.278610624 1.595497108 1.958379602 
x       xerr    y             yerr        class
512     0.000   12.72670807   0.0         1
1024    0.000   6.255813953   0.0         1
1536    0.000   3.405185409   0.0         1
2048    0.000   2.278610624   0.0         1
3072    0.000   1.595497108   0.0         1
4096    0.000   1.958379602   0.0         1
            };

\legend{EC-DSA, RSA}
\end{axis}
%
% EC-DSA
%
\begin{axis}[
scale only axis,
width=0.39\textwidth,
height=1.4in,
ymin=0,
ymax=14,
scatter/classes={
      0={mark=square*,blue, scale=1.5},%
      1={mark=triangle*,OliveGreen, scale=2},%
      2={mark=o,draw=orange, scale=2}},
ytick=\empty,
xlabel={EC-DSA key size (bits)},
xlabel near ticks,
xtick={112, 160, 192, 224, 256, 384, 521},
xticklabels={112, , 192, , 256, 384, 521},
domain=0:3,
axis y line*=none,
axis x line*=top,
axis y line*=right,
axis x line*=top]
\addplot[scatter, scatter src=\thisrow{class},
        color=blue,
        ultra thick,
        error bars/.cd, y dir=both, x dir=both, y explicit, x explicit, 
        error bar style={color=blue, thick}]
        table[x=x,y=y,x error=xerr,y error=yerr] {
% 8.959322034 7.897832817 7.993975904 6.497607656 7.95184136  6.057591623 6.764150943
x       xerr    y           yerr    class
112     0.000   8.959322034 0.0     0
160     0.000   7.897832817 0.0     0
192     0.000   7.993975904 0.0     0
224     0.000   6.497607656 0.0     0 
256     0.000   7.95184136  0.0     0 
384     0.000   6.057591623 0.0     0
521     0.000   6.764150943 0.0     0
              };
\end{axis}
\end{tikzpicture}
 
\caption{Computational overhead 
(in CPU user time) imposed when a
laptop uses our key generation protocols 
to generate keypairs of various sizes,
averaged over 32 trials.}
\label{fig:keysize}
\end{figure}

\pagebreak[4]
\section{Implementation Concerns}
\label{sec:impl}

This section discusses a handful of practical implementation issues
that a real-world deployment of our key generation protocols would have
to address. 

\paragraph{Integration with the CA infrastructure}
Integrating our key generation protocols with
the existing CA infrastructure would require 
only modest modifications to today's infrastructure.
In a deployment of our key generation protocol,
the device could interact with the entropy authority using
an HTTP API. 
After the device obtains the entropy authority's signature
on its public key, the device would embed the EA signature
in an extension field in the PKCS\#10 certificate signing
request that the device sends to the certificate authority.
Each certificate authority would maintain a list of public
keys of approved entropy authorities 
(in the way that browsers and SSL libraries today maintain a list
of root CA public keys).
When a certificate authority receives a PKCS\#10 
request from a device, the CA would first check the validity of
the EA's signature on the request. 
If the signature is valid and the CA is able to verify
the identity of the requesting device, the CA would sign the certificate
and return it to the device.

We expect that many commercial certificate authorities 
would be willing to serve as free public entropy authorities,
since the computational cost of acting as an 
entropy authority is small (less than one CPU-second per protocol run).
Organizations large enough to have their own IT departments might
run their own internal entropy authorities as well.

\paragraph{Self-Signed Certificates and SSH}
TLS servers often use self-signed
certificates to provide link encryption without
CA-certified identity.
The analogue of a self-signed certificate in our
setting is a certificate that is signed by the
entropy authority but that is {\em not} signed
by a certificate authority.
This sort of certificate would convince a third party
that the device's public key is sampled from a high-entropy 
distribution, without convincing a third party that the key
corresponds to a particular real-world identity.
As long as some EAs provide their services for free 
(which we expect), EA-signed certificates
will be free, just as self-signed certificates are free today.

To generate such a certificate, the device
would submit a PKCS\#10 certificate signing request to the entropy 
authority at the end of Step~\ref{rsa:delta}
of the RSA protocol or Step~\ref{dsa:pub} of
the DSA protocol, along with other data it sends.
The entropy authority would then sign the request
and would return the EA-signed certificate to the device.
TLS clients (e.g., Web browsers) would maintain a list
of public keys of approved entropy authorities, just as
today's client keep a list of approved root CAs.
When a client connects to a device that uses an EA-signed
certificate, the client would verify the EA's signature and 
would treat the certificate just as 
it treats self-signed certificates today.

SSH could similarly use EA-signed 
keys to use convince clients that
a particular SSH host generated 
its public key using random values
from an approved entropy authority.
To accomplish this, the SSH server software 
would define a new public key algorithm type
for EA-signed keys (e.g., {\tt ssh-rsa-rand}).
Keys of this type would contain the
SSH host's normal public key, but they would also
contain an EA's signature on the SSH host's public key
(along with the fingerprint of the signing EA's key).
SSH clients that support the {\tt ssh-rsa-rand} key type
would be able to verify the EA's signature on the 
host's key to confirm that that the host's
key incorporates randomness from an approved
entropy authority.

\paragraph{Other entropy issues}
Our key generation protocol {\em only} ensures that
a device's RSA or DSA keypair has sufficient randomness---it does
not ensure randomness in other security-critical parts of the system
(e.g., signing nonce generation, TLS session key selection,
address space layout randomization).
We focus on cryptographic key generation because
attacks against weak public keys are especially easy to mount.
Once a device publishes a weak public key,
the device is likely to use the same public key for months or years.
Thus, even if the device's entropy source strengthens
over time (as the device gathers randomness from
network interrupt timings, for example) 
the device's keys remains weak.
Hedged public-key cryptography~\cite{bellare09hedged,ristenpart10good},
in conjunction with our key generation protocols, would help 
reduce the risk of bad randomness in signing and encryption,
but solving all of these randomness problems is likely beyond
the scope of any single system.

\paragraph{Distributing trust with many entropy authorities}
As we note in Section~\ref{sec:model}, if the device has a 
weak entropy source then there is no way to protect the device against
an eavesdropper that observes {\em all communication} between
the device and the EA.
Our threat model excludes the possibility of such an eavesdropper,
but if the device is particularly concerned about eavesdroppers on
its initial conversation with the EA, the device
could run a modified version of the protocol with {\em many} 
entropy authorities instead of just one.
With multiple EAs, an eavesdropper 
would have to observe the device's communication 
with {\em all} of the EAs
to learn the device's secret key.
Informally, if an adversary controls all but one of the entropy authorities
(call this entropy authority the ``honest'' one) and if the adversary can
eavesdrop on the device's communication with all entropy authorities {\em
except} the honest one, then a private key $a$ generated using
a multi-authority key-generation protocol will still be sampled from
the uniform distribution, even when conditioned on the adversary's knowledge.

We sketch the multi-authority DSA protocol here.
A similar modification allows RSA key generation 
with multiple entropy authorities.
In the following protocol, each entity has a well-known
long-term signature verification public key and every message sent
between participants includes a public per-session nonce and
is signed with the sender's long-term public key.
The multi-authority DSA protocol proceeds as follows:
\begin{itemize}
  \item The device commits to its random $x$ using
        randomness $r$ and sends $C \gets \textsf{Commit}(x;r)$
        to each of the $N$ entropy authorities.
  \item Each entropy authority $i$ 
        selects random values $x_i$ and $r_i$,
        generates a commitment $C_i \gets \textsf{Commit}(x_i; r_i)$,
        produces a signature $\sigma_i$ on 
        $\langle i, C, C_i \rangle$
        and returns $\langle x_i, r_i, \sigma_i \rangle$ to the device.
  \item The device sets $A = g^{x+\Sigma_i x_i}$ and produces a 
        non-interactive zero-knowledge
        proof of knowledge:
        \begin{align*}
        \textsf{PoK}\{&x, x_1, \dots, x_N, r_1, \dots, r_N:\\
          &C = g^x h^r 
            \land (\land_i C_i = g^{x_i} h^{r_i}) 
            \land A = g^{x + \Sigma_i x_i}
        \}
        \end{align*}
        incorporating the randomness provided by the entropy authorities
        into the secrets it uses to generate the non-interactive proof.
        The device sends 
        \[\langle A, \textsf{PoK}, C_1, \dots, C_N, \sigma_1, \dots, \sigma_N \rangle\]
        to each entropy authority.
  \item Each entropy authority $i$ checks each signature $\sigma_j$,
        verifies the zero-knowledge proof, signs the device's public key $A$,
        and returns its signature on the public key to the device.
\end{itemize}

Without loss of generality, assume that entropy authority 1 is the honest one. 
Since we assume that the adversary cannot eavesdrop on the device's
communication with entropy authority 1, the adversary will never learn $x_1$
and it will only learn $C_1$ after it has
had to pick $C_2, \dots, C_N$. 
Thus, the adversary's commitments must be independent of $C_1$ and
its values $x_2, \dots, x_N$ must be independent of $x_1$.
Since the device generates the proof of knowledge using randomness derived from
the randomness sent by each entropy authority (including entropy authority 1),
the proof of knowledge will not leak any information about the device's secrets. 
The device's private key $a = x + \Sigma_i x_i \bmod Q$ will then be sampled from
the uniform distribution over $\mathbb{Z}_Q$, even conditioned on the adversary's
knowledge.

\paragraph{Default keys}
Roughly 5\% of TLS hosts on the Internet in 2012
used {\em default keys}, which are pre-loaded into the
device's firmware by the manufacturer~\cite{heninger12mining}.
Typically, any two such devices of the same model and firmware version will 
ship with the same public and secret key.
To recover a default secret key, an attacker can download the firmware
for the device from the manufacturer's Web site or look up
the default key in a database designed for that purpose~\cite{littleblackbox}.

Our protocol does not protect against a manufacturer who 
installs the same keypair in many devices.
If a manufacturer wants all of her devices to ship with a default
keypair signed by an entropy authority, the manufacturer could run
our key generation protocol {\em once} in the factory, and then
install this single EA-signed keypair in every device shipped.

Installing the same keypair in many devices is tantamount to
publishing the device's secret key, which is an ``attack'' 
which we cannot hope to prevent.
As a heuristic defense against default keys, a client connecting
to a device could require that the device use a certificate
that was generated after the manufacture of the device
(as indicated, for example, by an EA-signed timestamp on the certificate).

\section{Related Work}
\label{sec:rel}

Hedged public-key cryptography~\cite{bellare09hedged,ristenpart10good}
addresses the problem of weak randomness during message {\em signing} or {\em encryption},
whereas our work addresses the problem of randomness during {\em key generation}.
Cryptographic hedging provides no protection against randomness failures
when generating cryptographic keys but deployed systems could use 
hedging in conjunction with our
key-generation protocols to defend against weak randomness after generating
their cryptographic keys.

Intel's Ivy Bridge processor implements a hardware instruction 
that exploits physical uncertainty in a dedicated circuit
to gather random numbers~\cite{taylor11behind}.
A hardware random number generator provides a new and potentially
rich source of entropy to cryptographic applications.
Devices without hardware random number generators 
could use a variety of other techniques to gather
possibly unpredictable values early in the system boot 
process~\cite{mowery13welcome}.
Even so, having a rich entropy source does not mean that software
developers will properly incorporate the entropy into cryptographic secrets.
Our protocol ensures that keys will have high entropy, even if the
cryptographic software ignores or misuses hardware-supplied randomness.

\begin{figure}
\begin{tabular}{l | c c}
& Single exp.& Double exp. \\
\hline
Juels-Guajardo protocol~\cite{juels02verifiable} & 319 & 35 \\
This paper & 8 & 4 \\
\end{tabular}
\caption{Approximate number of 
  $k$-bit modular exponentiations the
  device must compute to generate
  a $k$-bit RSA modulus.}
\label{fig:juels}
\end{figure}

Juels and Guajardo~\cite{juels02verifiable} offer a protocol for
RSA key generation that is superficially similar
to the one we present here.
The Juels-Guajardo protocol protects against 
{\em kleptography}~\cite{young97kleptography}, in
which a device's cryptography library is adversarial, and
{\em repudiation}, in which a signer intentionally generates a weak
cryptographic signing key so that the signer can disown signed
messages in the future.
To prevent against these very strong adversaries, 
their protocol requires a number of 
additional zero-knowledge proofs 
that are unnecessary in our model.
Using the number of modular exponentiations as a
proxy for protocol execution time, the Juels-Guajardo
protocol would likely take over 40 minutes to execute
on the home router we used in our experiments, while
our protocol takes fewer than two minutes
(see Figure~\ref{fig:juels}).
In addition, Juels and Guajardo do {\em not}
address the issue of a device whose source of randomness
is so weak that it cannot create blinding commitments or
establish a secure SSL session.

\section{Conclusion}
This paper presents a systemic solution to the 
problem of low-entropy keys.
We present a new threat model, in which a device
generating cryptographic secrets may have one
communication session with an {\em entropy authority} 
which an eavesdropper cannot observe.
Under this threat model, we describe protocols
for generating RSA and DSA keypairs that 
do not weaken keys for devices that have a strong
entropy source, but that can considerably
strengthen keys generated on low-entropy devices.
Our key generation protocols incur tolerable slow-downs,
even on a CPU-limited home router.
The threat model and protocols presented herein
offer a promising solution to the long-standing
problem of weak cryptographic keys.

\subsection*{Acknowledgements}
We gratefully acknowledge Justin Holmgren for pointing out an error in the DSA
security proof of the proceedings version of this paper.  
We thank David
Wolinsky, Ewa Syta, and Zooko Wilcox-O'Hearn for their helpful comments.  
This material is based upon work supported by NSF, the Defense Advanced
Research Agency (DARPA) and SPAWAR Systems Center Pacific, Contracts No.
N66001-11-C-4018 and N66001-11-C-4022, and the National Science Foundation
Graduate Research Fellowship under Grant No. DGE-114747.  
This work was also
supported by a Google faculty award.

\bibliographystyle{abbrv}
\bibliography{papers}

\abbr{}{
\appendix
\section{Proof of Lemma~\reflemmanumth}
\label{app:proof}

We prove Lemma~\ref{lemma:numth} using a technique from Cilleruelo and 
Garaev~\cite[Proof of Theorem 1]{cilleruelo2011concentration}.
First we state the following simple fact.  In what follows, the quantity
$x \bmod m$ refers to an integer $|y|<m/2$ such that $x = y \pmod{m}$. 
We will use this fact where the scalar $c$ (below) is some small constant
such as $c=1$ or $c=2$.

\begin{fact} \label{easyfact}
For all integers $m>2,\ c>0$, and $a,b \in \mathbb{Z}$ 
there exists a non-zero integer $t$ such that $|t| < m^{1/2}/c^2$ and
\[  |a t \bmod m| < c m^{3/4}, \qquad   
                           |b t \bmod m| < c m^{3/4}  \ .   \]
\end{fact}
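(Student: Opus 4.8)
The plan is to prove this with a geometry-of-numbers argument; the exponents $1/2,3/4,3/4$ are calibrated precisely so that a certain box has volume matching Minkowski's threshold, which signals that Minkowski's convex body theorem is the right tool here rather than a crude box-pigeonhole (the latter would only bound $|t|$ by roughly $(m^{1/4}/c+1)^2$, overshooting the required $m^{1/2}/c^2$).

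First I would set up the lattice. Consider the linear map $\mathbb{Z}^3 \to \mathbb{R}^3$ sending $(t,u,v) \mapsto (t,\ at-um,\ bt-vm)$, whose matrix has rows $(1,0,0)$, $(a,-m,0)$, $(b,0,-m)$ and hence determinant $m^2$; let $\Lambda$ be its image, a full-rank lattice of covolume $m^2$. Next I would take the symmetric convex body $K = \{(\xi_1,\xi_2,\xi_3) : |\xi_1| \le m^{1/2}/c^2,\ |\xi_2| \le c\,m^{3/4},\ |\xi_3| \le c\,m^{3/4}\}$. This box is compact, symmetric about the origin, and convex, with volume $8\cdot(m^{1/2}/c^2)\cdot(c\,m^{3/4})^2 = 8m^2 = 2^3\cdot\mathrm{covol}(\Lambda)$.

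Minkowski's theorem for compact symmetric convex bodies then guarantees a nonzero lattice point $(t,\ at-um,\ bt-vm) \in K$. I would read off the conclusion as follows. The bound $|\xi_1| \le m^{1/2}/c^2$ gives the size constraint on $t$. Since $at-um \equiv at \pmod m$ and $|at-um| \le c\,m^{3/4} < m/2$ (using $c < m^{1/4}/2$), the integer $at-um$ is the unique residue of $at$ lying in $(-m/2,m/2)$, i.e. it equals $at \bmod m$; hence $|at \bmod m| \le c\,m^{3/4}$, and likewise for $b$. Finally $t \ne 0$, because $t=0$ would make $\xi_2=-um$ and $\xi_3=-vm$ with $|u|m,|v|m \le c\,m^{3/4} < m$, forcing $u=v=0$ and collapsing the point to the origin.

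The step needing the most care is reconciling the strict inequalities demanded by the statement with what Minkowski supplies: because the box volume sits exactly at the threshold $2^3\,\mathrm{covol}(\Lambda)$, the compact form yields only the non-strict bounds $|t| \le m^{1/2}/c^2$ and $|at \bmod m|, |bt \bmod m| \le c\,m^{3/4}$. To obtain the strict versions I would inflate $K$ in the $\xi_1$-direction by an infinitesimal amount and apply the open-body form of Minkowski's theorem, then discharge the resulting off-by-$\epsilon$ slack using integrality of $t$; together with the side condition $c < m^{1/4}/2$ needed to rule out modular wraparound, these bookkeeping points all hold comfortably in the regime in which the fact is invoked ($c \in \{1,2\}$ and $m$ exponentially large). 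Everything else is routine verification.
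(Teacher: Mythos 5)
Your proof is correct and is essentially the paper's own argument: the paper applies Minkowski's theorem to the lattice spanned by $(c^3 m^{1/4},\, a,\, b)$, $(0,m,0)$, $(0,0,m)$ (determinant $c^3 m^{9/4}$) intersected with the cube of side $2cm^{3/4}$, which is exactly your lattice-and-anisotropic-box setup after rescaling the first coordinate. If anything, you are more careful than the paper, which silently treats the exact-threshold case of Minkowski's theorem as giving strict inequalities and leaves implicit the condition $c < m^{1/4}$ needed both for $t \neq 0$ and for the fact to be non-vacuous (conditions amply satisfied in the regime $c \in \{1,2\}$, $m \approx 2^{4k}$ where the fact is invoked).
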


\begin{proof}
Consider the integer lattice spanned by three vectors 
$\mathbf{u}_1 = (c^3 m^{1/4}, a, b)$,\ \ \
$\mathbf{u}_2 = (0,m,0)$ and $\mathbf{u}_3 = (0,0,m)$.  
Its determinant is $c^3 m^{2.25}$ and therefore, by Minkowski's theorem,
this lattice must contain a non-zero point $\mathbf{u}$ whose $\ell_\infty$
norm is less than $c m^{3/4}$.  The required integer $t$ is the
multiplier of $\mathbf{u}_1$ in the integer linear combination used to
form $\mathbf{u}$ from the three basis vectors.   This $t$ must be non-zero
because $\mathbf{u}$ is non-zero.
\end{proof}

\bigskip
Next we prove Lemma~\ref{lemma:numth}.  We rename the variables
for clarity and allow the modulus $m$ (called $Q$ in Lemma~\ref{lemma:numth})
to be either prime or a composite with no small prime factors.
We slightly generalize the lemma and note that in Lemma~\ref{lemma:numth}
we fixed $c=1$, but here we allow $c$ to be arbitrary.  The optimization
at the end of Section~\ref{sec:security} will use $c=2$ and a modulus $m$
whose prime factors are all greater than $\sqrt{m}/4$.

\begin{lemma*}
For $c>0$ let $m$ be a positive integer whose smallest prime factor
is greater than $\sqrt{m}/c^2$.  Then for all 
constants $a,b \in \mathbb{Z}_m$, and all 
non-zero $|n| < m^{1/2}$ such that $n$ is a product of at most two primes: \\
\mbox{}\quad the number of solutions $x,y$ to the equation
\begin{equation} \label{eq:lemma}
  n = (x+a)(y+b) \pmod{m}  
\end{equation}
with $|x|,|y| < m^{1/4}/2$ is at most polynomial in $c \log m$.
\end{lemma*}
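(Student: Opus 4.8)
The plan is to follow the multiplier technique of Cilleruelo and Garaev, using Fact~\ref{easyfact} as the engine, converting the modular bilinear equation into a bounded collection of genuine integer equations whose solutions are then counted by a divisor bound sharpened by the two-prime hypothesis on $n$. First I would recast the problem geometrically: writing $u = x+a$ and $v = y+b$, equation~\eqref{eq:lemma} becomes $uv \equiv n \pmod m$, so we are counting points of the modular hyperbola $uv \equiv n$ lying in a box of side less than $m^{1/4}$ (the images of the intervals $|x|,|y| < m^{1/4}/2$). I would then invoke Fact~\ref{easyfact} for the constants $a,b$ to produce a nonzero multiplier $t$ with $|t| < m^{1/2}/c^2$ and with $\alpha := (ta \bmod m)$, $\beta := (tb \bmod m)$ both of absolute value less than $cm^{3/4}$. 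The point of $t$ is that for any solution the reduced quantities $U := t(x+a)\bmod m = tx+\alpha$ and $V := t(y+b)\bmod m = ty+\beta$ are honestly small, of size $O(cm^{3/4})$, since $|tx|,|ty| < m^{3/4}/(2c^2)$; moreover $x \mapsto U$ and $y \mapsto V$ are injective, so no solutions are lost.

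Next I would linearize by differences. Fixing one solution $(x_0,y_0)$ and writing any other as $(x_0+s,\,y_0+r)$ with $|r|,|s| < m^{1/4}$, subtracting the two instances of the congruence yields $(x_0+a)\,r + (y_0+b)\,s + rs \equiv 0 \pmod m$. Multiplying by $t$ and replacing $t(x_0+a), t(y_0+b)$ by their small residues $U_0,V_0$ gives $U_0 r + V_0 s + t\,rs \equiv 0 \pmod m$, in which every term is $O(cm)$ in absolute value; hence the congruence is in fact an integer identity $U_0 r + V_0 s + t\,rs = \ell m$ with $|\ell| = O(c)$. Completing the product, this reads $(ts+U_0)(tr+V_0) = U_0 V_0 + t\ell m$, i.e. $U V = C_\ell$, where $C_\ell$ is one of only $O(c)$ fixed integers of size $O(c^2 m^{3/2})$. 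Thus every solution produces a factorization of one of a constant number of explicit integers into the two bounded factors $U$ and $V$, which additionally lie in fixed residue classes modulo $t$. The hypothesis that the least prime factor of $m$ exceeds $\sqrt m/c^2$ enters here: the nonzero differences $r,s$, being smaller than $\sqrt m/c^2$, are coprime to $m$, so the linearization is non-degenerate, and $\gcd(n,m)=1$ up to a factor of size $O(c^2)$.

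The remaining and hardest step is to bound the number of admissible factorizations $UV = C_\ell$ by a power of $c\log m$ rather than merely by $m^{o(1)}$, and it is precisely here that I expect to need both $|n|<m^{1/2}$ and the assumption that $n$ is a product of at most two primes. The plan is to show that the integers $C_\ell$ inherit the factorization structure of $n$---for instance by reducing the integer identity modulo a prime factor of $m$ and using smallness of $n$ to pin the relevant factorization to an honest factorization associated to $n$ with both factors in the prescribed window---so that the number of such factorizations is governed by the divisor count of a product of at most two primes, which is an absolute constant. Summing this constant over the $O(c)$ levels $\ell$ and over the polylogarithmically many residue-class and sign configurations then yields the claimed bound, polynomial in $c\log m$. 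The main obstacles I anticipate are controlling how many distinct integer levels $C_\ell$ actually occur and rigorously transferring the two-prime structure of $n$ to the $C_\ell$; the degenerate cases where $U$ or $V$ shares a factor with $m$, and the boundary case $rs \equiv 0$, would be disposed of separately as low-order contributions.
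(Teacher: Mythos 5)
Your opening parallels the paper's argument: the multiplier $t$ from the Minkowski fact, the observation that $\gcd(t,m)=1$ because $|t|<\sqrt{m}/c^2$ is below the least prime factor of $m$, and the conversion of the bilinear congruence into $O(c)$ genuine integer equations are all present in the paper (which works with $txy+vx+uy = r_0+zm$, $|z|\le c+1$, rather than with your differenced form $U_0r+V_0s+trs=\ell m$; the differencing is a harmless variant). The difficulty is entirely in your final step, and there the proposal has a genuine gap that you yourself flag but do not resolve: the claim that the levels $C_\ell = U_0V_0 + t\ell m$ ``inherit the factorization structure of $n$'' is false in general. All you know is $C_\ell \equiv t^2 n \pmod m$ with $|C_\ell| = O(c^2 m^{3/2})$; when $t$ is large (say $t \approx m^{1/2}/c^2$, so $|U|,|V| \approx c\,m^{3/4}$), $C_\ell$ is essentially an arbitrary representative of the class $t^2 n \bmod m$ and carries no trace of the two-prime structure of $n$. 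The only available bound on the number of bounded factorizations $UV=C_\ell$ is then the divisor function $d(C_\ell)$, which for integers of size $m^{3/2}$ can be as large as $\exp\left(\Theta(\log m/\log\log m)\right)$ --- super-polylogarithmic, so the target bound polynomial in $c\log m$ is out of reach. Restricting $U,V$ to residue classes mod $t$ does not improve the worst case, and your proposed rescue --- reducing modulo a prime factor $P$ of $m$ --- loses information rather than gaining it, since $P$ can be as small as $\approx\sqrt{m}/c^2$ while $|U|,|V|$ can be $\approx c\,m^{3/4} \gg P$.

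The paper closes exactly this hole with a case split that your plan lacks. If $|t|\le m^{1/4}/2$ and $|u|,|v|\le m^{1/2}/4$, then both sides of $(tx+u)(ty+v) \equiv t^2 n \pmod m$ are below $m/2$ in absolute value, so the equation holds over $\mathbb{Z}$ with right-hand side \emph{exactly} $t^2n$; after arranging $\gcd(t,u)=1$ one gets $\gcd(tx+u,t)=1$, forcing $tx+u$ to divide $n$, and the two-prime hypothesis then caps the count at eight. This is the only regime in which the structure of $n$ transfers, and it is precisely the regime your differenced levels would also collapse to $t^2n$. In the complementary regime ($|t|>m^{1/4}/2$ or $|u|>m^{1/2}/4$ or $|v|>m^{1/2}/4$), the paper abandons divisor counting altogether: the largest coefficient $W$ of the bilinear polynomial exceeds $m^{3/4}/8$, so $XY = m^{1/2}/4 < W^{2/3}$ and Coppersmith's bivariate small-root algorithm enumerates \emph{all} solutions of each of the $O(c)$ integer equations in time $\polylog(m)$, which bounds their number by $\polylog(m)$ with no reference to the arithmetic of the right-hand side. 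Your proposal needs a substitute for this Coppersmith step in the large-$t$ regime; without one, the argument cannot reach a bound polynomial in $c\log m$.
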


\begin{proof}
First, we can assume that one of $a$ or $b$ is non-zero since
otherwise~{(\ref{eq:lemma})} clearly only has a constant number of
small solutions.  From Fact~\ref{easyfact} we find a non-zero integer
$|t|<m^{1/2}/c^2$ such that the integers 
$u = t a \bmod m$ and $v = t b \bmod m$  are both less than $c m^{3/4}$ in 
absolute value.  By assumption on the prime factors of $m$ we know 
that $\gcd(t,m)=1$.  It follows that one of $u$ or $v$ must be
non-zero, since otherwise $a=b=0$.

Without loss of generality we can assume that $\gcd(t,u,v)=1$ since
otherwise we can replace $t,u,v$ by their values divided by the gcd
and ensure that $\gcd(t,u,v)=1$.  This only reduces the magnitude of
$t,u,$ and $v$ and therefore the bounds on $|t|,|u|,|v|$ continue to
hold. Moreover, since $\gcd(t,m)=1$, after division by $\gcd(t,u,v)$
it still holds that $u = t a \bmod m$ and $v = t b \bmod m$.

Multiplying~{(\ref{eq:lemma})} by $t$, we obtain
\begin{equation}\label{eq:lemma-equiv}
txy + vx + uy \equiv tn - tab \pmod m 
\end{equation}
The left-hand side of this congruence satisfies:
\[ |txy + vx + uy| \leq m^{1/2} m^{1/4} m^{1/4} + c m^{3/4} m^{1/4} = (c + 1) m \]
Therefore, setting $r_0 = tn - tab \bmod m$ where $|r_0|<m/2$,
we can rewrite the congruence (\ref{eq:lemma-equiv}) as an equation:
\begin{equation}\label{eq:copp}
txy + vx + uy = r_0 + zm \qquad \text{for some integer $|z| \leq c+1$}
\end{equation}
There are now two cases to consider:
\begin{itemize}
\item\textbf{Case (i)}:
$|t| > m^{1/4}/2$ or $|u| > m^{1/2}/4$ or $|v| > m^{1/2}/4$.
In this case, for each integer $|z| \leq c+1$ we can apply Coppersmith's 
algorithm~\cite{coppersmith1997small} to~{(\ref{eq:copp})} to find small roots
$x$ and $y$ of the polynomial. 
To apply Coppersmith's algorithm to find the roots of a polynomial, 
the coefficients of the polynomial must share no common factor
and this holds since $\gcd(t,u,v)=1$. 

Coppersmith's algorithm finds roots 
$x \leq X$ and $y \leq Y$ of a polynomial $f(x,y)$ 
with degree $\delta$ in each variable separately,
when the largest coefficient $W$ of $f(Xx, Yy)$ 
satisfies: $XY < W^{2/(3\delta)}$.
For our purposes, we set $\delta=1,\ X = m^{1/4}/2$, and
$Y = m^{1/4}/2$ which are the bounds on $|x|$ and $|y|$ in
the theorem statement.

The value $W$ then is at least:
\begin{align*}
\max\{tXY, vX, uY\} &< W\\
\max\{tm^{1/2}/4, vm^{1/4}/2, um^{1/4}/2\} &< W
\end{align*}
In this case we have that either $|t| > m^{1/4}/2$,
$|u| > m^{1/2}/4$ or $|v| > m^{1/2}/4$ and therefore $W > m^{3/4}/8$.
Thus, we can apply Coppersmith's algorithm, since
\[XY = m^{1/2}/4 = (m^{3/4}/8)^{2/3} < W^{2/3} \ .  \]

Now, for each $|z| \leq c+1$, since Coppersmith's algorithm finds all 
small solutions to (\ref{eq:copp}) in time polynomial in $\log W$, 
and $W < m$, there can be at most
$\polylog(m)$ solutions to (\ref{eq:copp}) as required.
Since $|z|<c+1$, overall there are at most $(c+1) \polylog(m)$ solutions.

\item\textbf{Case (ii)}: 
$|t| \leq  m^{1/4}/2$ and $|u| \leq m^{1/2}/4$ and $|v| \leq m^{1/2}/4$.
Since $m^{1/4}/2$ cannot be an integer we know that $|t|$ is strictly
less than $m^{1/4}/2$.  Multiplying (\ref{eq:lemma-equiv}) by $t$, we can write
\[ (tx + u)(ty + v) \equiv t^2 n \pmod m \]
In this case, both the left-hand and right-hand sides of this equation
are strictly less than $m/2$ in absolute value.   Therefore, the equation
\begin{equation}\label{eq:lemma-int}
(tx + u)(ty + v) = t^2 n 
\end{equation}
must hold over the integers.

Since by assumption one of $u$ or $v$ is non-zero, let us suppose
that $u$ is the non-zero one (the case that $v$ is non-zero is handled 
identically).
In this case we can assume that $\gcd(t, u) = 1$, since
otherwise we can divide both sides of (\ref{eq:lemma-int}) by
$\gcd(t,u)$.  Since $\gcd(t, u) = 1$ we obtain that
$$\gcd(tx+u, t) = \gcd(t, u) = 1 $$
and consequently the only way that (\ref{eq:lemma-int}) can hold
is if $tx+u$ divides $n$.  Since $n$ is the product of at most two primes, 
$tx+u$ can only be $\pm 1, \pm n$, or one of the prime factors 
of $n$ (or their negatives).  Overall there can be at most eight 
solutions to~{(\ref{eq:lemma-int})} and therefore at most eight 
solutions to~{(\ref{eq:lemma})}.
\end{itemize}

\noindent
Either way there can be at most $\polylog(m)$ solutions to~{(\ref{eq:copp})}
as required.
\end{proof}

}

\end{document}